\DeclareMathOperator*{\argmin}{arg\,min}
\DeclareMathOperator*{\argmax}{arg\,max}
\newcommand{\abs}[1]{\left\vert#1\right\vert}
\newcommand{\girth}{\mathrm{girth}}%
\newcommand{\wone }{\lVert w \rVert_1}%
\newcommand{\R }{\mathds{R}}
\newcommand{\N }{\mathds{N}}
\newcommand{\calN }{\mathcal{N}} 
\newcommand{\calC }{\mathcal{C}} 
\newcommand{\calT }{\mathcal{T}} 
\newcommand{\calV }{\mathcal{V}} 
\newcommand{\calJ }{\mathcal{J}} 
\newcommand{\calB }{\mathcal{B}}
\renewcommand{\Pr }{\mathrm{Pr}}
\newcommand{\reduced}{\mathrm{red}}
\newcommand{\Trim }{\mathrm{Trim}}
\newcommand{\PPrefix }{\mathrm{Prefix}^+}
\newcommand{\cost }{\mathrm{cost}}
\newcommand{\ML}{\textsc{ml}}
\newcommand{\NLO}{\textsc{slo}}
\newcommand{\LO}{\textsc{lo}}
\newcommand{\NWMS}{\textsc{nwms}}
\newcommand{\calCbar }{\overline{\mathcal{C}}}
\newcommand{\calCbarJ}{\overline{\mathcal{C}}^{\mathcal{J}}}
\newtheorem{definition}{Definition}
\newtheorem{lemma}[definition]{Lemma}
\newtheorem{proposition}[definition]{Proposition}
\newtheorem{theorem}[definition]{Theorem}
\newtheorem{corollary}[definition]{Corollary}
\newtheorem{example}{Example}
\begin{document}

\title{Hierarchies of Local-Optimality Characterizations in Decoding of Tanner Codes\thanks{A preliminary version of this paper appeared in the proceedings of the IEEE International Symposium on Information Theory, Cambridge, MA,USA, 2012.}}

\author{
      Nissim Halabi \thanks{School of Electrical Engineering, Tel-Aviv University, Tel-Aviv 69978, Israel. \mbox{{E-mail}:\ {\tt nissimh@eng.tau.ac.il}.}}
      \and
      Guy Even \thanks{School of Electrical Engineering, Tel-Aviv University, Tel-Aviv 69978, Israel.  \mbox{{E-mail}:\ {\tt guy@eng.tau.ac.il}.}}}

\date{}

 \maketitle

\begin{abstract}
Recent developments in decoding of Tanner codes with maximum-likelihood certificates are based on a sufficient condition called local-optimality.
We define hierarchies of locally-optimal codewords with respect to two parameters.
One parameter is related to the minimum distance of the local codes in Tanner codes.
The second parameter is related to the finite number of iterations used in iterative decoding.
We show that these hierarchies satisfy inclusion properties as these parameters are increased. In particular, this implies that a codeword that is decoded with a certificate using an iterative decoder after $h$ iterations is decoded with a certificate after $k\cdot h$ iterations, for every integer $k$.
\end{abstract}

\section{Introduction}  \label{sec:intro}

Local-optimality is often used as a sufficient condition for
successful decoding of finite-length codes (see
e.g.,~\cite{WJW05,ADS09}).  In this work we focus on two parameters of
the local-optimality characterization for Tanner
codes~\cite{EH11}.  The first parameter is related to the minimum
distance of the local codes in (expander) Tanner codes.  The second
parameter is related to the finite number of iterations used in
iterative decoding, even when number of iterations exceeds the girth of the Tanner graph. We define hierarchies of
local-optimality with respect to these parameters. These hierarchies
provide a partial explanation of two questions about successful
decoding with ML-certificates:
(1)~What is the effect of increasing the minimum distance of the local codes in Tanner codes?
(2)~What is the effect of increasing the number of iterations beyond the girth in iterative decoding?

\emph{Previous Work:}
Suboptimal decoding of expander Tanner codes was analyzed in many
works (see e.g.,~\cite{SS96,BZ04,FS05}).  The results in these analyses rely
on: (i)~the expansion properties of the Tanner graph, and
(ii)~constant relative minimum distances of the local codes.  The
error-correcting guarantees in these analyses improve as the expansion factor and relative minimum distance increase.
The first part of our work focuses on the effect of increasing the minimum distance of the local codes on error correcting guarantees of Tanner codes by ML-decoding and LP-decoding.

Density Evolution (DE) is used to study the
asymptotic performance of decoding algorithms based on
Belief-Propagation (BP) (see e.g.,~\cite{RU01,CF02}).  Convergence of
BP-based decoding algorithms to some fixed point was studied
in~\cite{FK00,WF01,WJW05,JP11}.  However, convergence guarantees do not imply
successful decoding after a finite number of iterations.  Korada and
Urbanke~\cite{KU11} provide an asymptotic analysis of iterative
decoding ``beyond'' the girth.  Specifically, they prove that one may
exchange the order of the limits in DE-analysis of BP-decoding under
certain conditions (i.e., variable node degree at least $5$ and
bounded LLRs).  On the other hand, the second part of our work focuses on properties of iterative
decoding of finite-length codes using a finite number of iterations.

A new local-optimality characterization for a codeword in a Tanner
code w.r.t. any MBIOS channel was presented in~\cite{EH11}.  A
locally-optimal codeword is guaranteed to be both the unique
maximum-likelihood (ML) codeword as well as the unique LP-decoding
codeword. The characterization of local-optimality for Tanner codes
has three parameters: (i)~a height $h\in\N$, (ii)~level weights
$w\in\R_+^h$, and (iii)~a degree $2\leqslant d\leqslant d^*$, where
$d^*$ is the smallest minimum distance of the component local codes.

A new message-passing decoding algorithm, called \emph{normalized
  weighted min-sum} (\NWMS), was presented for Tanner codes with single parity-check (SPC)
local codes~\cite{EH11}.  The \NWMS\ decoder is guaranteed to compute the
ML-codeword in $h$ iterations provided that a locally-optimal codeword
with height parameter $h$ exists. The number of iterations $h$ may exceed
the girth of the Tanner graph.

\emph{Contribution:}
To obtain one of the hierarchy results, we needed a new definition of local-optimality called
\emph{strong local-optimality}. We prove that if a codeword is strongly
locally-optimal, then it is also locally-optimal (Lemma~\ref{lemma:NLOinLO}). Hence, previous results
proved for local-optimality~\cite{EH11} hold also for strong
local-optimality.

We present two combinatorial hierarchies: (1)~A \emph{hierarchy of
  local-optimality based on degrees}. The degree hierarchy states that a locally-optimal codeword $x$ with degree parameter $d$ is also locally-optimal
with respect to any degree parameter $d'>d$. The degree hierarchy
implies that the occurrence of local-optimality does not decrease as
the degree parameter increases.
(2)~A \emph{hierarchy of strong local-optimality based on height}. The
height hierarchy states that if a codeword $x$ is
strongly locally-optimal with respect to height parameter $h$, then it
is also strongly locally-optimal with respect to every height parameter
that is an integer multiple of $h$. The height hierarchy proves, for
example, that the performance of iterative decoding with an
ML-certificate (e.g., \NWMS) of finite-length Tanner
codes with SPC local codes does not degrade as the number of
iterations grows, even beyond the girth of the Tanner graph.

\paragraph{Organization.} In Section~\ref{sec:trim} we introduce a key trimming procedure used in the proofs of the hierarchies. In Section~\ref{sec:degreeHier} we prove that the degree-based hierarchy induces a chain of inclusions of locally-optimal codewords and LLRs. In Section~\ref{sec:heightHier} we prove a height-based hierarchy over strong local-optimality. We show that strong local-optimality implies local-optimality. Numerical results of strong local-optimality and local-optimality with respect to the height hierarchy are presented in Section~\ref{sec:numerical}.
We conclude with a discussion in Section~\ref{sec:discussion}.

\section{Preliminaries}
\label{sec:prelim}
\paragraph{Graph Terminology.}
Let $G=(V,E)$ denote an undirected graph. Let $\mathcal{N}_G(v)$ denote the set
of neighbors of node $v \in V$, and let $\deg_G(v)\triangleq\lvert\mathcal{N}_G(v)\rvert$ denote the degree of node $v$ in graph $G$.
A path
$p=(v,\ldots,u)$ in $G$ is a sequence of vertices such that there
exists an edge between every two consecutive nodes in the sequence
$p$. A path $p$ is \emph{backtrackless} if every three consecutive vertices along $p$ are distinct (i.e., a subpath $(u,v,u)$ is not allowed).
Let $\lvert p\rvert$ denote the number of edges in $p$.
Let $\girth(G)$ denote the length of the shortest cycle in $G$.
Given a graph G, an \emph{edge-labeling} is a function that maps edges of G to a set of labels. In this case, $G$ is called an \emph{edge-labeled graph}.

\paragraph{Tanner-codes.}
Let $G=(\calV\cup\calJ,E)$ denote an edge-labeled
bipartite-graph, where $\calV=\{v_1,\ldots,v_N\}$ is a set of
$N$ vertices called \emph{variable nodes}, and $\calJ=\{C_1,\ldots,C_J\}$ is a set of $J$ vertices called \emph{local-code nodes}.
The edge labeling is specified by an ordering $1,\ldots,\deg_G(C_j)$ to edges incident to each local-code node $C_j$, and hence specifies an order on $\mathcal{N}_G(C_j)$ with respect to $C_j$ for every $1\leqslant j\leqslant J$.
We associate with each local-code node $C_j$ a linear code $\calCbar^j$ of length $\deg_G(C_j)$. Let $\calCbarJ \triangleq \big\{\calCbar^j\ :\ 1\leqslant j \leqslant J \big\}$ denote the set of \emph{local codes}, one for each local code node. We say that $v_i$ \emph{participates} in $\calCbar^j$ if $(v_i,C_j)$ is an edge in $E$.

A word $x=(x_1,\ldots,x_N)\in\{0,1\}^N$ is an assignment to variable nodes in $\calV$ where $x_i$ is assigned to $v_i$.
The \emph{Tanner code} $\calC(G,\calCbarJ)$ based on the labeled
\emph{Tanner graph} $G$ is the set of vectors $x\in\{0,1\}^N$
such that the projection of $x$ onto
entries associated with $\calN_G(C_j)$ is a codeword in $\calCbar^j$ for every $j \in \{1,\ldots,J\}$.
Let $d_j$ denote the minimum distance of the local code $\calCbar^j$.
The \emph{minimum local distance} $d^*$ of a Tanner code
$\calC(G,\calCbarJ)$ is defined by $d^*\triangleq\min_j d_j$. We assume that $d^*\geq 2$.

If the bipartite graph is $(d_L,d_R)$-regular, then the graph defines a \emph{$(d_L,d_R)$-regular Tanner code}.
If the Tanner graph is sparse, i.e., $|E|=O(N)$, then it defines a
\emph{generalized low-density parity-check (GLDPC)} code. Tanner codes with
single parity-check (SPC) local codes that are based on sparse Tanner graphs are called \emph{low-density parity-check (LDPC) codes}.

\paragraph{Communicating over memoryless channels.}
Let $c_i\in\{0,1\}$ denote the $i$th transmitted binary symbol (channel input), and let $y_i\in\R$ denote the $i$th received symbol (channel output).
A \emph{memoryless binary-input output-symmetric} (MBIOS) channel is defined by a conditional probability density function $f(y_i|c_i=a)$ for $a\in\{0,1\}$, that satisfies $f(y_i|0) = f(-y_i|1)$.
In MBIOS channels, the \emph{log-likelihood ratio} (LLR) vector $\lambda \in \R^N$ is defined by $\lambda_i (y_i) \triangleq
\ln\big(\frac{f(y_i|c_i=0)}{f(y_i|c_i=1)}\big)$ for every
input bit $i$. For a code $\calC$, \emph{Maximum-Likelihood (ML)
decoding} is equivalent to finding a word $\hat{x}^{\ML}$ that satisfies $\hat{x}^{\ML}(y) = \argmin_{x \in \calC} \langle\lambda(y) , x \rangle$.

\paragraph{Deviations.}
A new characterization for local-optimality of Tanner codes was presented in~\cite{EH11} as extension to~\cite{ADS09, Von10}.
Local-optimality is a combinatorial characterization of a codeword with respect to a given LLR vector. This characterization of local optimality is based on a set of vectors, called deviations, induced by combinatorial structures in computation trees of the Tanner graph. The set of deviations is specified in~(\ref{eqn:deviations}), and local-optimality is defined in Definition~\ref{def:localOptimality}.
We present a few definitions, examples of which appear in Example~\ref{exmple:deviation}

\begin{definition}[Path-Prefix Tree]\label{def:ppt}
Consider a graph $G=(V,E)$ and a node $r\in V$. Let $\hat{V}$ denote the set of all backtrackless paths in $G$ with length at most $h$ that start at node $r$, and let
$\hat{E} \triangleq \big\{(p_1,p_2)\in\hat{V}\times\hat{V}\ \big\vert\ p_1\ \mathrm{is~a~prefix~of~}p_2,~\lvert p_1\rvert+1=\lvert p_2\rvert \big\}$.
We denote the zero-length path in $\hat{V}$ by $(r)$.
The directed graph $(\hat{V},\hat{E})$ is called the \emph{path-prefix tree} of $G$ rooted at node $r$ with height $h$, and is denoted by $\calT_r^{h}(G)$.
\end{definition}
\noindent The graph $\calT_r^{h}(G)$ is obviously acyclic and is an out-tree rooted at $(r)$. Path-prefix trees of $G$ that are rooted in variable nodes are often called \emph{computation trees}.

We use the following notation. Vertices in $\calT_r^{h}(G)$
are paths in $G$, and are denoted by $p$ and
$q$, while vertices in $G$ are denoted by $u,v,r$. For a path
$p\in\hat{V}$, let $t(p)$ denote the last vertex (\emph{target}) of path
$p$. Denote by $\PPrefix(p)$ the set of proper prefixes of the path
$p$, i.e., not including the root and $p$. Formally,
\begin{equation*}
\PPrefix(p) = \big\{q\ \big\vert\ q \mathrm{\ is\ a\ prefix\ of\ }p,\ 1 \leqslant\rvert q \lvert<\lvert p \rvert\big\}.
\end{equation*}
When $G=(\calV\cup\calJ,E)$ is a Tanner graph, let $\hat{\calV}$ denote the set of paths in $\hat{V}$ that end in a variable node, i.e., $\hat{\calV}\triangleq \{p\ \vert\ p\in\hat{V},\ t(p)\in\calV\}$. Let $\hat{\calJ}$ denote the set of paths in $\hat{V}$ that end in a local-code node, i.e., $\hat{\calJ}\triangleq \{p\ \vert\ p\in\hat{V},\ t(p)\in\calJ\}$.
Paths in $\hat{\calV}$ are called \emph{variable paths}, and paths in  $\hat{\calJ}$ are called \emph{local-code paths}.

\begin{definition}[$d$-tree]\label{def:Dtree}
Let $G=(\calV\cup\calJ,E)$ denote a Tanner graph.
A subtree $\calT \subseteq \calT_r^{2h}(G)$ is a \emph{$d$-tree} if:
\begin{inparaenum}[(i)]
\item $r$ is a variable node,
\item $\calT$ is rooted at the root $(r)$ of $\calT_r^{2h}$,
\item for every local-code path $p\in\calT\cap\hat\calJ$, $\deg_\calT(p)=d$, and
\item for every variable path $p\in\calT\cap\hat\calV$, $\deg_\calT(p)=\deg_{\calT_r^{2h}}(p)$.
\end{inparaenum}
\end{definition}
\noindent Let $\calT[r,2h,d](G)$ denote the set of all $d$-trees rooted at $r$ that are subtrees of $\calT_r^{2h}(G)$.

\begin{definition} [$w$-weighted subtree]\label{def:weightedSubtree}
  Let $\calT = (\hat{\calV}\cup\hat{\calJ},\hat{E})$ denote a subtree
  of $\calT_r^{2h}(G)$, and let $w=(w_1,\ldots,w_h)\in\R_+^h\setminus\{0^h\}$ denote a
  non-negative weight vector.  Let
  $w_\calT:\hat{\mathcal{V}}\rightarrow \R$ denote the weight function defined as follows. If $p$ is a zero-length variable path, then $w_\calT(p)=0$. Otherwise,
\begin{equation}\label{eqn:w-weights}
   w_\calT(p) \triangleq \frac{w_\ell}{
\wone}\cdot\frac{1}{\deg_G\big(t(p)\big)}\cdot \prod_{q\in \PPrefix(p)}\frac{1}{\deg_{\calT}(q)-1},
\end{equation}
where $\ell = \big\lceil\frac{\lvert p\rvert}{2}\big\rceil$. We refer to $w_\calT$ as a $w$-weighted subtree.
\end{definition}

For any $w$-weighted subtree $w_\calT$ of $\calT_{r}^{2h}(G)$, let $\pi_{G,\calT,w}:\calV \rightarrow \R$ denote a function whose values correspond to the projection of $w_\calT$ on the Tanner graph $G$. That is, for every variable node $v$ in $G$,
\begin{equation}
\pi_{G,\calT,w}(v) \triangleq \sum_{\{p\in\calT\mid t(p)=v\}}w_\calT(p).
\end{equation}

For a Tanner code $\calC(G)$, let $\mathcal{B}_d^{(w)}\subseteq
[0,1]^N$ denote the set of all projections of
$w$-weighted $d$-trees on $G$. That is,
\begin{equation}\label{eqn:deviations}
\mathcal{B}_d^{(w)} \triangleq \big\{\pi_{G,\calT,w}\ \big\vert\ \calT\in\bigcup_{r\in\calV}\calT[r,2h,d](G)\big\}.
\end{equation}
Vectors in $\mathcal{B}_d^{(w)}$ are called \emph{deviations}.

\begin{example} [deviation induced by a normalized weighted subtree in computation tree of the Tanner graph]\label{exmple:deviation}
Figure~\ref{fig:DeviationExample} depicts a construction of a $3$-tree as a subtree of a path-prefix tree with height $4$ of a Tanner graph. The Tanner graph illustrated in Figure~\ref{fig:tannerGraph} contains $4$ variable nodes (depicted by circles) and $3$ local-code nodes (depicted by squares). We label the variable nodes by `$a$',`$b$',`$c$', and `$d$', and the local-code nodes by `$X$',`$Y$', and `$Z$'. Figure~\ref{fig:ppt} depicts the path-prefix tree of $G$ rooted at variable node `$b$' with height $4$, denoted by $\calT_b^4(G)$. The nodes of $\calT_b^4(G)$ correspond to backtrackless paths in $G$. We depict, for example, the variable paths $(b)$, $(b,Y,c)$, and $(b,Y,c,Z,a)$ and the local-code paths $(b,Y)$ and $(b,Y,c,Z)$. Figure~\ref{fig:3tree} depict a $3$-tree in $\calT_b^4(G)$. Denote this $3$-tree by $\calT$. The degree of every variable path in $\calT$ equals to its degree in the path-prefix tree $\calT_b^4(G)$, and the degree of every local-code path in $\calT$ equals exactly $3$. We depict every variable path in the path-prefix tree that ends at node `$a$' by a filed circle. Every other path node $q\in\calT$ is labeled within the node by $t(q)$, i.e., the last node in the path $q$.

Let $w=(2,4)\in\R^2$. The weight function of the $w$-weighted $3$-tree $\calT$ for variable path $p:=(b,Y,c,Z,a)$ is calculated as follows. Note that $\lvert p\rvert=4$, $t(p)=a$, and $\PPrefix\big((b,Y,c,Z,a)\big)=\{(b,Y), (b,Y,c), (b,Y,c,Z)\}$. Then,
\begin{align*}
w_\calT(p) &= \frac{w_2}{
\wone}\cdot\frac{1}{\deg_G(a)}\cdot \prod_{q\in \PPrefix(p)}\frac{1}{\deg_{\calT}(q)-1}\\
&= \frac{4}{2+4}\cdot\frac{1}{3}\cdot\frac{1}{2\cdot2\cdot2} = \frac{1}{36}.
\end{align*}
Similarly, $w_\calT\big((b,Z,a)\big)=\frac{1}{18}$.

The projection of $w_\calT$ on the Tanner graph $G$ for variable node $a$ is calculated by summing up all the weights of the variable paths in $\calT$ that end at $a$. For $\calT$ depicted in Figure~\ref{fig:3tree}, $\pi_{G,\calT,w}(a)=\frac{1}{18}+\frac{1}{36}+\frac{1}{36}+\frac{1}{36}+\frac{1}{36}+\frac{1}{36}=\frac{7}{36}$, The deviation that corresponds to $\calT$ is $\beta=\pi_{G,\calT,w}=(\frac{7}{36},\frac{4}{18},\frac{1}{4},\frac{11}{36})\in\R^4$.

\begin{figure}[]
\centering
\subfigure[]{\includegraphics[width=0.15\textwidth]{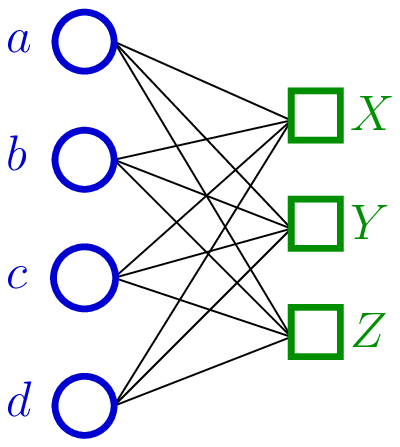}
\label{fig:tannerGraph}}\\
\subfigure[]{\includegraphics[width=0.4\textwidth]{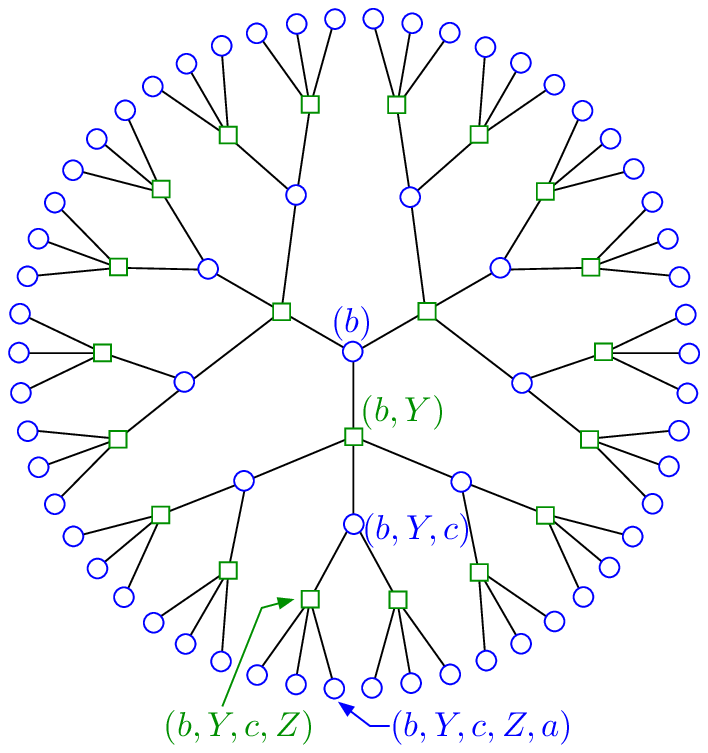}
\label{fig:ppt}}
\subfigure[]{\includegraphics[width=0.6\textwidth]{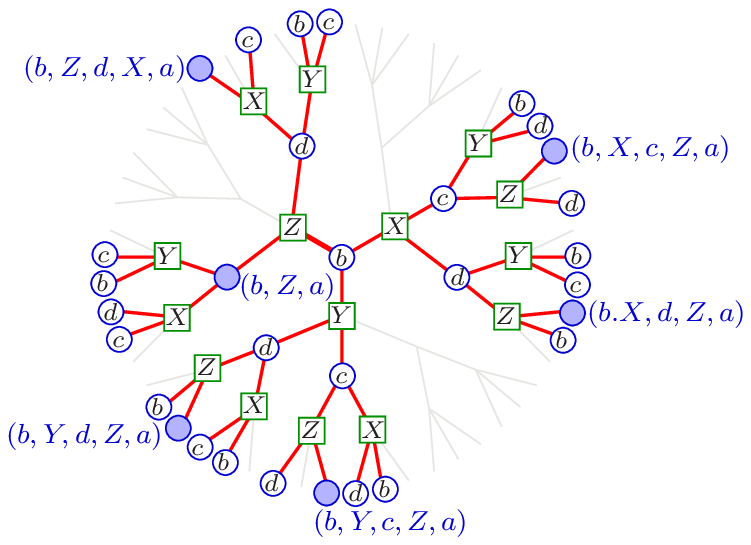}
\label{fig:3tree}}
\label{fig:DeviationExample}
\caption{Example of a $3$-tree: subtree in a computation tree of the Tanner graph.
\newline{\small (a) Tanner graph $G$. Variable nodes marked by circles and labeled by `$a$',`$b$',`$c$',`$d$'. Local-codes nodes marked by squares and labeled by `$X$',`$Y$',`$Z$'. (b) The Path-prefix tree (computation tree) of Tanner graph $G$ rooted at variable node `$b$' with height $4$. (c) A $3$-tree (d=3).
Consider a variable node $a\in\calV$. Each node $p$ in the $3$-tree that is a variable-path that ends in the variable node `$a$' (i.e., the path $p$ ends in the variable node `$a$' of $G$) is depicted by a filled circle, and the path it represents is written next to it. Other nodes (both variable paths and local-code paths) are labeled by their last node.}}
\end{figure}
\end{example}

\paragraph{Local-Optimality Characterization.}
For two vectors $x \in \{0,1\}^N$ and $f \in [0,1]^N$, let $x\oplus f \in [0,1]^N$ denote the \emph{relative point} defined by $(x\oplus f)_i \triangleq |x_i-f_i|$~\cite{Fel03}.

\begin{definition}[local-optimality, \cite{EH11}] \label{def:localOptimality}
A codeword $x \in \calC(G)$ is
  \emph{$(h,w,d)$-locally optimal with respect to $\lambda \in \R^N$}
  if for all vectors $\beta \in \mathcal{B}_d^{(w)}$,
\begin{equation}
\langle \lambda,x \oplus \beta \rangle > \langle \lambda, x \rangle.
\end{equation}
\end{definition}

The following theorem states a combinatorial condition that is sufficient for both ML-optimality and LP-optimality given a channel observation.
\begin{theorem}[local-optimality is sufficient for ML and LP, \cite{EH11}]\label{thm:MLsufficient}
Let $\lambda\in\R^N$ denote the LLR vector received from the channel. If $x$ is an $(h,w,d)$-locally optimal codeword w.r.t. $\lambda$ and some $2\leqslant d \leqslant d^{*}$, then (1)~$x$ is the unique maximum-likelihood codeword w.r.t. $\lambda$, and (2)~$x$ is the unique optimal solution of the LP-decoder given $\lambda$.
\end{theorem}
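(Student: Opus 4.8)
The plan is to reduce claim~(1) to claim~(2) and then prove~(2) by a probabilistic ``re-routing'' argument. For the reduction, recall that every codeword of $\calC(G)$ is a vertex of the polytope $\mathcal{P}$ on which the LP-decoder minimizes $\langle\lambda,\cdot\rangle$; hence if $x$ is the unique minimizer over $\mathcal{P}$ it is in particular the unique minimizer over $\calC(G)$, which is ML-optimality. So it suffices to show that for every $z\in\mathcal{P}$ with $z\neq x$ we have $\langle\lambda,z\rangle>\langle\lambda,x\rangle$.

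Next I would pass to relative points. Writing $\tilde\lambda_i\triangleq(-1)^{x_i}\lambda_i$, a direct computation gives $\langle\lambda,x\oplus f\rangle-\langle\lambda,x\rangle=\langle\tilde\lambda,f\rangle$ for every $f\in[0,1]^N$. In particular, $(h,w,d)$-local optimality of $x$ w.r.t.\ $\lambda$ is exactly the statement $\langle\tilde\lambda,\beta\rangle>0$ for every deviation $\beta\in\mathcal{B}_d^{(w)}$. Since $\mathcal{P}$ is invariant under coordinate complementations along a codeword (the standard symmetry of the fundamental polytope), the point $z'\triangleq x\oplus z$ again lies in $\mathcal{P}$, is nonzero, and satisfies $\langle\lambda,z\rangle-\langle\lambda,x\rangle=\langle\tilde\lambda,z'\rangle$. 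Thus the goal becomes: if $\langle\tilde\lambda,\beta\rangle>0$ for all $\beta\in\mathcal{B}_d^{(w)}$, then $\langle\tilde\lambda,z'\rangle>0$ for every nonzero $z'\in\mathcal{P}$.

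The heart of the proof is a combinatorial lemma: for every nonzero $z'\in\mathcal{P}$ there is a probability distribution $\mathcal{D}$ over deviations in $\mathcal{B}_d^{(w)}$ and a constant $c=c(z')>0$ with $\E_{\mathcal D}[\beta]=c\cdot z'$. I would build $\mathcal{D}$ by a randomized construction of a $w$-weighted $d$-tree: pick a root variable node $r$ with probability proportional to $z'_r$; then grow a subtree of $\calT_r^{2h}(G)$ top-down where, at every variable-path, one branches to all children (matching $\deg_{\calT_r^{2h}}$), and at every local-code-path $p$ with $t(p)=C_j$ reached from a parent variable $v$, one samples a local codeword $\gamma$ of the projection $\mathrm{proj}_{\calN_G(C_j)}(z')$ from its convex decomposition conditioned on $\gamma_v=1$ and branches to a uniformly random $(d-1)$-subset of $\mathrm{supp}(\gamma)\setminus\{v\}$. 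This is well-defined because $\mathrm{proj}_{\calN_G(C_j)}(z')\in\mathrm{conv}(\calCbar^j)$ and every nonzero codeword of $\calCbar^j$ has weight at least $d_j\ge d^*\ge d$, so $\lvert\mathrm{supp}(\gamma)\setminus\{v\}\rvert\ge d-1$; this is exactly where the hypothesis $d\le d^*$ is used. One then proves $\E_{\mathcal D}[\beta]=c\,z'$ by induction on depth: the weight $w_\calT(p)$ is the same in every $d$-tree containing $p$ (its dependence on $\deg_\calT(q)$ for $q\in\PPrefix(p)$ is fixed to $d$ at local-code paths and to a $G$-determined value at variable paths), so $\E_{\mathcal D}[\beta_v]=\sum_{q:t(q)=v}\Pr[q\in\calT]\cdot w(q)$; in this sum the sub-sampling factor $1/(\mathrm{wt}(\gamma)-1)$ cancels against the number $\mathrm{wt}(\gamma)-1$ of candidate continuations, while the normalizers $1/\deg_G(\cdot)$, $1/(d-1)$ and $w_\ell/\wone$ in~(\ref{eqn:w-weights}) telescope the per-level contributions, forcing the marginal at each $v$ to be the same multiple $c$ of $z'_v$. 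Given the lemma the proof concludes: $c\,\langle\tilde\lambda,z'\rangle=\langle\tilde\lambda,\E_{\mathcal D}[\beta]\rangle=\E_{\mathcal D}\big[\langle\tilde\lambda,\beta\rangle\big]>0$ since every deviation in the support satisfies $\langle\tilde\lambda,\beta\rangle>0$; as $c>0$ this yields $\langle\tilde\lambda,z'\rangle>0$, establishing~(2), and hence~(1), with strict inequality for each $z\neq x$ giving uniqueness.

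I expect the main obstacle to be the exact mean identity $\E_{\mathcal D}[\beta]=c\,z'$: one must choose the conditioning and sub-sampling in the tree-growing process so that, after accounting for the precise normalization constants in the definition of a $w$-weighted subtree, the per-coordinate marginals all scale by the \emph{same} factor $c$ --- a coordinatewise domination would not suffice here, because $\tilde\lambda$ has entries of both signs. Carrying this out for general (non-SPC) local codes, and checking that the randomized object is always a legal $d$-tree of $\calT_r^{2h}(G)$ (full branching at variable paths, degree exactly $d$ at local-code paths, backtracklessness), is the delicate part; the remaining ingredients (the reduction of~(1) to~(2), the relative-point identities, and the symmetry of $\mathcal{P}$) are routine.
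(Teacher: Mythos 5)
The paper does not itself prove Theorem~\ref{thm:MLsufficient}: it is stated with a citation to~\cite{EH11}, so there is no in-paper proof to compare against. That said, your outline reconstructs the argument of~\cite{EH11} (which in turn extends~\cite{ADS09} and~\cite{Von10}) in all essential respects: reducing ML-optimality to LP-optimality via $\calC(G)\subseteq\mathcal{P}$; passing to relative coordinates $\tilde\lambda=(-1)^x\ast\lambda$ (Proposition~\ref{proposition:isoLO}) together with the codeword-symmetry of the fundamental polytope so that it suffices to show $\langle\tilde\lambda,z'\rangle>0$ for all nonzero $z'\in\mathcal{P}$; and, as the central lemma, exhibiting a distribution $\mathcal D$ on $\mathcal{B}_d^{(w)}$ with $\E_{\mathcal D}[\beta]=c\,z'$ for some $c>0$, so that strict positivity of $\langle\tilde\lambda,\cdot\rangle$ on all deviations pushes through the expectation. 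The randomized path-prefix-tree growth you describe is the construction used there, and you correctly identify where the hypothesis $2\leqslant d\leqslant d^*$ is used: sampling $\gamma$ from the local convex decomposition conditioned on $\gamma_v=1$ gives a nonzero local codeword, hence $\lvert\mathrm{supp}(\gamma)\setminus\{v\}\rvert\geqslant d_j-1\geqslant d-1$, so branching to exactly $d-1$ children at every local-code path is always feasible and the resulting object is a legal $d$-tree. You also put your finger on the delicate step --- establishing the \emph{exact} identity $\E_{\mathcal D}[\beta]=c\,z'$ rather than a coordinatewise bound, which would be useless here since $\tilde\lambda$ is signed --- and on the role of the normalizers in~(\ref{eqn:w-weights}): the $1/(d-1)$ factors at local-code prefixes cancel against the $(d-1)/(\mathrm{wt}(\gamma)-1)$ marginal of the $(d-1)$-subset sampling, the $1/(\deg_G(\cdot)-1)$ factors at variable prefixes cancel against the full branching, the conditioning normalizer $z'_v$ at each hop produces the telescoping, and the $w_\ell/\wone$ level weights sum to $1$ across levels. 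This matches the proof in the cited reference, so I have nothing to add beyond the observation that the result is imported rather than proved here.
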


For a word $x\in\{0,1\}^N$, let $(-1)^x\in\{\pm1\}^N$ denote a vector whose $i$th component equals $(-1)^{x_i}$. Denote by $0^N$ the all-zero vector of length $N$.
For two vectors $y,z\in \R^N$, let ``$\ast$'' denote coordinatewise
multiplication, i.e., $y\ast z \triangleq (y_1\cdot z_1,\ldots,
y_N\cdot z_N)$.
\begin{proposition}[\cite{EH11}]\label{proposition:isoLO}
For every $\lambda\in\R^N$ and every $\beta\in[0,1]^N$,
\[\langle (-1)^x\ast\lambda,\beta\rangle  = \langle  \lambda,x\oplus\beta\rangle-\langle\lambda,x\rangle.\]
\end{proposition}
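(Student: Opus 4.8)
The plan is to verify the identity one coordinate at a time, exploiting the fact that both sides are sums of $N$ terms indexed by $i\in\{1,\dots,N\}$. On the left, $\langle (-1)^x\ast\lambda,\beta\rangle=\sum_{i=1}^N(-1)^{x_i}\lambda_i\beta_i$. On the right, by linearity and the definition of the relative point,
\[
\langle\lambda,x\oplus\beta\rangle-\langle\lambda,x\rangle=\sum_{i=1}^N\lambda_i\big((x\oplus\beta)_i-x_i\big)=\sum_{i=1}^N\lambda_i\big(\lvert x_i-\beta_i\rvert-x_i\big).
\]
Hence it suffices to establish, for each $i$, the scalar identity $(-1)^{x_i}\beta_i=\lvert x_i-\beta_i\rvert-x_i$, after which I would multiply by $\lambda_i$ and sum over $i$.

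First I would observe that since $x\in\{0,1\}^N$, each $x_i$ is either $0$ or $1$, so only two cases arise. If $x_i=0$, then $(-1)^{x_i}=1$, so the left-hand side of the scalar identity is $\beta_i$; the right-hand side is $\lvert{-\beta_i}\rvert-0=\beta_i$, using $\beta_i\geqslant 0$. If $x_i=1$, then $(-1)^{x_i}=-1$, so the left-hand side is $-\beta_i$; the right-hand side is $\lvert 1-\beta_i\rvert-1=(1-\beta_i)-1=-\beta_i$, using $\beta_i\leqslant 1$. In both cases the scalar identity holds.

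There is no genuine obstacle in this argument; the only point that requires (minor) attention is that the membership $\beta\in[0,1]^N$ is precisely what allows the absolute value $\lvert x_i-\beta_i\rvert$ to be resolved without a sign ambiguity in each of the two cases, which is why that hypothesis appears in the statement. Summing the verified coordinatewise identities, weighted by $\lambda_i$, over $i=1,\dots,N$ then yields $\langle (-1)^x\ast\lambda,\beta\rangle=\langle\lambda,x\oplus\beta\rangle-\langle\lambda,x\rangle$, completing the proof.
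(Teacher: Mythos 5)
Your proof is correct. The paper itself states Proposition~\ref{proposition:isoLO} without proof, citing it from~\cite{EH11}, so there is no in-paper argument to compare against; your coordinate-wise reduction to the scalar identity $(-1)^{x_i}\beta_i = \lvert x_i-\beta_i\rvert - x_i$, followed by the two-case analysis on $x_i\in\{0,1\}$ (using $\beta_i\in[0,1]$ to resolve the absolute value), is exactly the natural and complete verification of this identity.
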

The following proposition states that the mapping
$(x,\lambda)\mapsto(0^N,(-1)^x\ast\lambda)$ preserves local-optimality.
\begin{proposition}[symmetry of local-optimality, \cite{EH11}]\label{proposition:LOsymmetry}
  For every $x\in\calC$, $x$ is $(h,w,d)$-locally optimal w.r.t.
  $\lambda$ if and only if $0^N$ is $(h,w,d)$-locally optimal w.r.t.
  $(-1)^x\ast\lambda$.
\end{proposition}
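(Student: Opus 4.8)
The plan is to reduce both sides of the claimed equivalence to one and the same manifestly symmetric inequality, using only Proposition~\ref{proposition:isoLO} and the definition of local-optimality. First I would rewrite the left-hand condition: by Definition~\ref{def:localOptimality}, $x$ is $(h,w,d)$-locally optimal w.r.t.\ $\lambda$ precisely when $\langle\lambda,x\oplus\beta\rangle-\langle\lambda,x\rangle>0$ for every $\beta\in\calB_d^{(w)}$. Applying Proposition~\ref{proposition:isoLO} verbatim, the left-hand side of that difference equals $\langle(-1)^x\ast\lambda,\beta\rangle$, so the left-hand condition is equivalent to the statement that $\langle(-1)^x\ast\lambda,\beta\rangle>0$ for all $\beta\in\calB_d^{(w)}$.

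Next I would rewrite the right-hand condition. Writing $\mu\triangleq(-1)^x\ast\lambda$, Definition~\ref{def:localOptimality} says that $0^N$ is $(h,w,d)$-locally optimal w.r.t.\ $\mu$ precisely when $\langle\mu,0^N\oplus\beta\rangle>\langle\mu,0^N\rangle$ for every $\beta\in\calB_d^{(w)}$. Two elementary observations complete the reduction: $0^N\oplus\beta=\beta$, since $(0^N\oplus\beta)_i=\lvert 0-\beta_i\rvert=\beta_i$ (and $\beta_i\in[0,1]$, so no sign ambiguity arises), and $\langle\mu,0^N\rangle=0$. Hence the right-hand condition is also equivalent to $\langle(-1)^x\ast\lambda,\beta\rangle>0$ for all $\beta\in\calB_d^{(w)}$, which is exactly the statement obtained for the left-hand condition. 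The two characterizations coincide, and the proposition follows.

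I expect there to be essentially no obstacle here: the only points demanding a moment's care are performing the correct substitution into Proposition~\ref{proposition:isoLO} and checking the trivial identity $0^N\oplus\beta=\beta$. An alternative, perhaps even cleaner, route is to observe that the right-hand condition is just the special case of the Proposition~\ref{proposition:isoLO} rewriting with $x$ replaced by $0^N$ (using $(-1)^{0^N}=1^N$, so $(-1)^{0^N}\ast\mu=\mu$), which makes the ``conjugation by $x$'' symmetry transparent. Either way, the deviation set $\calB_d^{(w)}$ enters only as the common domain of the universal quantifier, so no properties of $d$-trees, level weights, or computation trees need to be invoked, and the parameters $h,w,d$ carry through unchanged.
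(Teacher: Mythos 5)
Your argument is correct and is essentially the intended one: the paper itself gives no proof for this proposition (it is cited from \cite{EH11}), but it places Proposition~\ref{proposition:isoLO} immediately before it precisely as the engine for this symmetry, and your two-line reduction of both sides to ``$\langle(-1)^x\ast\lambda,\beta\rangle>0$ for all $\beta\in\calB_d^{(w)}$'' via $0^N\oplus\beta=\beta$ and $\langle\mu,0^N\rangle=0$ is exactly that derivation. Nothing is missing and nothing extra is needed.
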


\section{Trimming Subtrees from a Path-Prefix Tree}
\label{sec:trim}

Let $\calT_q$ denote the subtree of a path-prefix tree $\calT$ hanging from path $q$, i.e., the subtree induced by $\hat{V}_q\triangleq\{p\in\hat\calV\cup\hat\calJ\mid q\in\PPrefix(p)\ \mathrm{or}\ p=q\}$ (see Figure~\ref{fig:trimProof}).
Let $\Trim(\calT,q)$ denote the subtree of $\calT$ obtained by deleting the subtree $\calT_q$ from $\calT$. Formally, $\Trim(\calT,q)$ is the path-prefix subtree of $\calT$ induced by $\hat\calV\cup\hat\calJ\setminus\hat{V}_q$.
Note that if $q'$ is a sibling of $q$ (i.e., $q'$ differs from $q$ only in the last edge), then the degree of the parent of $q$ and $q'$ decreases by one as a result of trimming $\hat{V}_q$. Hence, $w_{\calT}(q'')<w_{\Trim(\calT,q)}(q'')$ for every variable path $q''\in\hat{V}_{q'}$.
\begin{figure}
  \begin{center}
 \includegraphics[width=0.3\textwidth]{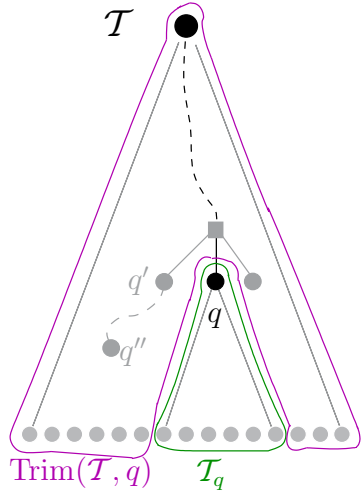}
 \caption{Trimmed tree of $\calT$ induced by $q$.}
  \label{fig:trimProof}
  \end{center}
\end{figure}

The proofs of the hierarchies presented in the following sections are based on the following lemma.
\begin{lemma}\label{lemma:trim}
 Let $\calT$ denote a subtree of a path-prefix tree $\calT_{r}^{2h}(G)$. For every path $p\in\calT$ with at least two children in $\calT$, there exists at least one child $p'$ of $p$, such that
\[\langle\lambda,\pi_{G,\calT,w}\rangle \geqslant \langle\lambda,\pi_{G,\Trim(\calT,p'),w}\rangle.\]
\end{lemma}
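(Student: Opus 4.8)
Proof proposal.

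The plan is to expand $\langle\lambda,\pi_{G,\calT,w}\rangle$ into per-child contributions at $p$, to observe that deleting one child of $p$ rescales all the surviving contributions by a single common factor, and then to pick the child to delete by an averaging argument. Let $p_1,\dots,p_k$ be the children of $p$ in $\calT$, so that $k\ge 2$ by hypothesis and $\deg_\calT(p)=k+1$; I take $p\ne(r)$, which suffices for the applications. Decompose
\[
\langle\lambda,\pi_{G,\calT,w}\rangle \;=\; B \;+\; \sum_{i=1}^{k} W_i,\qquad W_i:=\sum_{q\in\calT_{p_i}\cap\hat{\calV}}\lambda_{t(q)}\,w_\calT(q),
\]
where $W_i$ is the contribution of the hanging subtree $\calT_{p_i}$ and $B$ collects $\lambda_{t(q)}\,w_\calT(q)$ over all variable paths $q$ that do not lie strictly below $p$ (this includes $q=p$ itself when $p\in\hat{\calV}$).

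Next I would track what $\Trim(\calT,p_j)$ does to these pieces. Deleting $\calT_{p_j}$ removes $W_j$ and lowers $\deg_\calT(p)$ from $k+1$ to $k$. Reading off~(\ref{eqn:w-weights}): for a variable path $q$, the quantity $\deg_\calT(p)$ enters $w_\calT(q)$ only through the single factor $1/(\deg_\calT(p)-1)$, and only when $p\in\PPrefix(q)$, that is, exactly when $q$ lies strictly below $p$. Hence $B$ is untouched, and for each surviving child $i\ne j$ every weight $w_\calT(q)$ with $q\in\calT_{p_i}\cap\hat{\calV}$ is multiplied by the \emph{same} factor $(\deg_\calT(p)-1)/(\deg_\calT(p)-2)=k/(k-1)$ (the denominator $k-1$ is positive, which is where the ``at least two children'' hypothesis enters). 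Therefore
\[
\langle\lambda,\pi_{G,\Trim(\calT,p_j),w}\rangle \;=\; B \;+\; \frac{k}{k-1}\sum_{i\ne j} W_i .
\]

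Cancelling $B$ and rearranging, the desired inequality $\langle\lambda,\pi_{G,\calT,w}\rangle\ge\langle\lambda,\pi_{G,\Trim(\calT,p_j),w}\rangle$ becomes equivalent to $W_j\ge \tfrac{1}{k}\sum_{i=1}^{k}W_i$. Since $\tfrac{1}{k}\sum_i W_i$ is the average of $W_1,\dots,W_k$, at least one index $j$ satisfies this (for instance, any $j$ maximizing $W_j$); taking $p':=p_j$ for such a $j$ finishes the proof. The step I expect to require the most care is the bookkeeping in the middle paragraph: one must verify that $\deg_\calT(p)$ occurs in each relevant weight at most once, with the stated exponent, and only for paths strictly below $p$, so that trimming acts as one uniform rescaling $k/(k-1)$. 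That is exactly what makes the final threshold come out as the plain average $\tfrac{1}{k}\sum_i W_i$ — rather than something strictly larger, which in general no child could meet — and it works because $\deg_\calT(p)-1$ equals the number $k$ of children of $p$.
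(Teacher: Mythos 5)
Your argument for $p\neq(r)$ is correct and is essentially the paper's proof: split the inner product into the part not under $p$ (unchanged by trimming, since $p\notin\PPrefix(q)$ for such $q$) and the per-child contributions $W_1,\dots,W_k$; observe that deleting one child of $p$ multiplies each surviving $W_i$ by the single common factor $\frac{\deg_\calT(p)-1}{\deg_\calT(p)-2}=\frac{k}{k-1}$, because $\deg_\calT(p)$ enters $w_\calT(q)$ exactly once, via $\frac{1}{\deg_\calT(p)-1}$, and only when $p\in\PPrefix(q)$; then trim the child of maximum contribution. The paper packages that last step as a small stand-alone averaging proposition (Proposition~\ref{prop:avg}), and phrases the decomposition recursively via a $\cost$ function, while you inline both; the content is identical.

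The gap is the parenthetical claim ``I take $p\neq(r)$, which suffices for the applications.'' It does not: the proof of Lemma~\ref{lemma:NLOinLO} invokes Lemma~\ref{lemma:trim} at $p=(r)$, trimming one child of the root of a $d$-tree to obtain a reduced $d$-tree. And at $p=(r)$ the rescaling argument genuinely fails, for the reason your own bookkeeping identifies: $\PPrefix(q)$ excludes the root, so $\deg_\calT((r))$ never appears in any $w_\calT(q)$. Trimming a root child therefore removes $W_j$ without rescaling the survivors, and the asserted inequality reduces to ``some $W_j\geq 0$,'' which is false whenever $\lambda$ makes every subtree hanging from the root have negative cost. (The paper's proof has the same implicit restriction: it sets $k=\deg_\calT(p)-1$, which counts the children only when $p$ has a parent edge.) So the fix is not to exclude the root but to observe that at the root a weaker, conditional statement is what is actually true and used: if $\sum_i W_i\leq 0$, then deleting a maximizing child still gives $\sum_{i\neq j}W_i\leq 0$, since $\max_j W_j\geq\frac{1}{m}\sum_i W_i\geq\sum_i W_i$ whenever $\sum_i W_i\leq 0$; that one-sided inequality is precisely what Lemma~\ref{lemma:NLOinLO} needs from the trim.
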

\begin{proof}
See Appendix~\ref{app:trimProof}.
\end{proof}

\section{Degree Hierarchy of Local-Optimality}
\label{sec:degreeHier}

Let $\Lambda\subseteq\R^N$ denote a set of LLR vectors.
Denote by $\LO_{\calC,\Lambda}(h,w,d)$ the set of pairs $(x,\lambda)\in\calC\times\Lambda$ such that $x$ is $(h,w,d)$-locally optimal w.r.t. $\lambda$. Formally,
\begin{equation}
\LO_{\calC,\Lambda}(h,w,d)\triangleq
\big\{(x,\lambda)\in\calC\times\Lambda\mid x \mathrm{\ is}\ (h,w,d)\mathrm{-locally\ optimal\ w.r.t.}\ \lambda\big\}.
\end{equation}

The following theorem derives an hierarchy on the ``density'' of deviations in local-optimality characterization.
\begin{theorem}[$d$-Hierarchy of local-optimality]\label{thm:d-hierarchy}
Let $2 \leqslant d
  < d^*$. For every $\Lambda\subseteq\R^N$,
  \[\LO_{\calC,\Lambda}(h,w,d)\subseteq \LO_{\calC,\Lambda}(h,w,d+1).\]
\end{theorem}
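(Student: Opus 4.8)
The plan is to realize every degree-$(d{+}1)$ deviation as the output of a sequence of trimming steps applied to some degree-$d$ deviation, where no step increases the relevant inner product; the hypothesis of $(h,w,d)$-local optimality then propagates to all degree-$(d{+}1)$ deviations. First I would reduce to the all-zero codeword: given $(x,\lambda)\in\LO_{\calC,\Lambda}(h,w,d)$, Proposition~\ref{proposition:LOsymmetry} says $0^N$ is $(h,w,d)$-locally optimal w.r.t. $\mu\triangleq(-1)^x\ast\lambda$, and it suffices to show that $0^N$ is then $(h,w,d{+}1)$-locally optimal w.r.t. $\mu$ (one more application of Proposition~\ref{proposition:LOsymmetry} returns $(x,\lambda)$ to $\LO_{\calC,\Lambda}(h,w,d{+}1)$). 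Since $0^N\oplus\beta=\beta$ and $\langle\mu,0^N\rangle=0$, Definition~\ref{def:localOptimality} for the zero codeword just says $\langle\mu,\beta\rangle>0$ for every $\beta\in\mathcal{B}_d^{(w)}$, and we must establish the same for every $\beta\in\mathcal{B}_{d{+}1}^{(w)}$. (If $\mathcal{B}_{d{+}1}^{(w)}=\emptyset$ there is nothing to prove; note $d{+}1\leqslant d^*$ by hypothesis.)

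Next, fix a $(d{+}1)$-tree $\calT'\in\calT[r,2h,d{+}1](G)$ for some $r\in\calV$; the goal is $\langle\mu,\pi_{G,\calT',w}\rangle>0$. List the local-code paths of $\calT'$ as $p_1,\dots,p_m$ in non-decreasing order of length, set $\calT^{(0)}\triangleq\calT'$, and process them one at a time: if $p_i$ is no longer present in $\calT^{(i-1)}$, set $\calT^{(i)}\triangleq\calT^{(i-1)}$; otherwise $p_i$ still has $d{+}1\geqslant 2$ children in $\calT^{(i-1)}$, so Lemma~\ref{lemma:trim} supplies a child $p_i'$ of $p_i$ with $\langle\mu,\pi_{G,\calT^{(i-1)},w}\rangle\geqslant\langle\mu,\pi_{G,\Trim(\calT^{(i-1)},p_i'),w}\rangle$, and we set $\calT^{(i)}\triangleq\Trim(\calT^{(i-1)},p_i')$. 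Chaining these inequalities yields $\langle\mu,\pi_{G,\calT',w}\rangle\geqslant\langle\mu,\pi_{G,\calT^{(m)},w}\rangle$, so it remains to check that $\calT^{(m)}$ is a genuine $d$-tree; then $\pi_{G,\calT^{(m)},w}\in\mathcal{B}_d^{(w)}$ and $\langle\mu,\pi_{G,\calT^{(m)},w}\rangle>0$ by $(h,w,d)$-local optimality of $0^N$, which finishes the proof since $\calT'$ was arbitrary.

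The only part that needs real care — the main obstacle — is the bookkeeping that justifies the two assertions just used ("a surviving $p_i$ still has $d{+}1$ children" and "$\calT^{(m)}$ is a $d$-tree"). Each step $\Trim(\cdot,p_i')$ deletes exactly the subtree hanging from the \emph{variable} path $p_i'$, i.e. $p_i'$ together with all paths strictly below it. I would prove by induction on $i$ the invariant: $\calT^{(i)}$ is a prefix-closed subtree of $\calT_r^{2h}(G)$ rooted at $(r)$ in which every present variable path has its full degree $\deg_{\calT_r^{2h}(G)}$, every present already-processed local-code path has degree $d$, and every present not-yet-processed local-code path has degree $d{+}1$. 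The inductive step uses that removing the hanging subtree of a variable path $p_i'$ can change the degree of a node $q$ only when some child of $q$ lies in that subtree; since the subtree is rooted at a variable path, this forces either $q=p_i$ (a local-code path whose degree drops by exactly one, to $d$) or $q$ itself to be deleted — so no surviving variable path loses a child and no surviving local-code path other than $p_i$ changes degree. Evaluating the invariant at $i=m$ shows $\calT^{(m)}$ meets conditions (i)--(iv) of Definition~\ref{def:Dtree}, i.e. it is a $d$-tree, completing the argument.
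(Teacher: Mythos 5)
Your proof is correct and uses the same core idea as the paper's: repeatedly trim a $(d{+}1)$-tree down to a $d$-tree via Lemma~\ref{lemma:trim} so that the relevant inner product never increases, then invoke Proposition~\ref{proposition:LOsymmetry}. The only cosmetic differences are that you argue the direct implication where the paper argues the contrapositive, and you supply an explicit inductive invariant for the trimming loop (surviving variable paths keep full degree; exactly one local-code path's degree drops per step) that the paper leaves implicit.
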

\begin{proof}
We prove the contrapositive statement. Assume that $x$ is not $(h,w,d+1)$-locally optimal w.r.t. $\lambda$. By Proposition~\ref{proposition:LOsymmetry}, $0^N$ is not $(h,w,d+1)$-locally optimal w.r.t. $\lambda^0\triangleq (-1)^x\ast\lambda$. Hence, there exists a deviation $\beta=\pi_{G,\calT,w}\in\calB_d^{(w)}$ such that $\langle\lambda^0,\beta\rangle\leqslant0$. Let $\calT$ denote the $(d+1)$-tree that corresponds to the deviation $\beta$.

Consider the following iterative trimming process.
Start with the $(d+1)$-tree $\calT$ and let $\calT\gets\calT'$; While there exists a local-code path $p\in\calT'$ such that $\deg_{\calT'}(p)=d+1$ do: $\calT' \gets \Trim(\calT',q)$ where $q$ is a child of $p$ such that $\langle\lambda^0,\pi_{G,\calT',w}\rangle \geqslant \langle\lambda^0,\pi_{G,\Trim(\calT',q),w}\rangle$.

Lemma~\ref{lemma:trim} guarantees that the iterative trimming process halts with a $d$-tree $\calT'$
whose corresponding deviation $\beta'=\pi_{G,\calT',w}$ satisfies $\langle \lambda^0,\beta'\rangle\leqslant\langle\lambda^0,\beta\rangle\leqslant0$.
We conclude by Proposition~\ref{proposition:LOsymmetry} that $x$ is not $(h,w,d)$-locally optimal w.r.t. $\lambda$, as required.
\end{proof}

\medskip \noindent We conclude that for every $2 \leqslant d
  < d^*$,
\begin{align*}
\Pr_\lambda\big\{x\mathrm{\ is\ }(h,w,d+1)\mathrm{-locally\ optimal\ w.r.t.\ }&\lambda\big\} \geqslant\\
&\Pr_\lambda\big\{x\mathrm{\ is\ }(h,w,d)\mathrm{-locally\ optimal\ w.r.t.\ }\lambda\big\}.
\end{align*}

\section{Height Hierarchy of Strong Local-Optimality}
\label{sec:heightHier}

In this section we introduce a new combinatorial characterization named \emph{strong local-optimality}.
We prove that if a codeword is strongly locally-optimal then it is also locally-optimal. The other direction is not true in general. We prove a hierarchy on strong local-optimality based on the height parameter. We discuss in Section~\ref{sec:discussion} on the implications of the height hierarchy on iterative message-passing decoding of Tanner codes.

\begin{definition}[reduced $d$-tree]\label{def:reducedDtree}
Denote by $\calT_r^{2h}(G)=(\hat\calV\cup\hat\calJ,\hat{E})$ the path-prefix tree of a Tanner graph $G$ rooted at node $r\in\calV$. A subtree $\calT \subseteq \calT_r^{2h}(G)$ is a \emph{reduced $d$-tree} if:
\begin{inparaenum}[(i)]
\item $\calT$ is rooted at $r$,
\item $\deg_\calT\big((r)\big)=\deg_G(r)-1$,
\item for every local-code path $p\in\calT\cap\hat\calJ$, $\deg_\calT(p)=d$, and
\item for every non-empty variable path $p\in\calT\cap\hat\calV$, $\deg_\calT(p)=\deg_{\calT_r^{2h}}(p)$.
\end{inparaenum}
\end{definition}
The only difference between Definition~\ref{def:Dtree} ($d$-tree) to a reduced $d$-tree is that the degree of the root in a reduced $d$-tree is smaller by 1 (as if the root itself hangs from an edge)\footnote{This difference is analogous to the ``edge'' versus ``node'' perspectives of tree ensembles in the book Modern Coding Theory~\cite{RU08}}.

Let $\calT^\reduced[r,2h,d](G)$ denote the set of all reduced $d$-trees rooted at $r$ that are subtrees of $\calT_r^{2h}(G)$. For a Tanner code $\calC(G)$, let $\overline{\calB}^{(w)}_d\subseteq[0,1]^N$ denote the set of all projections of $w$-weighted reduced $d$-trees on $G$. That is,
\begin{equation}\overline{\calB}^{(w)}_d \triangleq \big\{\pi_{G,\calT,w}\big\vert\calT\in\bigcup_{r\in\calV}\calT^\reduced[r,2h,d](G)\big\}.
\end{equation}
Vectors in $\overline{\calB}^{(w)}_d$ are referred to as \emph{reduced deviations}.

The following definition is analogues to Definition~\ref{def:localOptimality} (local-optimality) using reduced deviations instead of deviations.
\begin{definition}[strong local-optimality] \label{def:nealyLO}
Let $\mathcal{C}(G) \subset \{0,1\}^N$ denote a Tanner code. Let $w \in \R_+^h\backslash\{0^h\}$ denote a
  non-negative weight vector of length $h$ and let $d \geqslant 2$.
  A codeword $x \in \calC(G)$ is
  \emph{$(h,w,d)$-strong locally-optimal with respect to $\lambda \in \R^N$}
  if for all vectors $\beta \in \overline{\mathcal{B}}_d^{(w)}$,
\begin{equation}
\langle \lambda,x \oplus \beta \rangle > \langle \lambda, x \rangle.
\end{equation}
\end{definition}

Denote by $\NLO_{\calC,\Lambda}(h,w,d)$ the set pairs $(x,\lambda)\in\calC\times\Lambda$ such that $x$ is $(h,w,d)$-strong locally-optimal w.r.t. $\lambda$. Formally,
\begin{equation}
\NLO_{\calC,\Lambda}(h,w,d)\triangleq
\big\{(x,\lambda)\in\calC\times\Lambda\mid x \mathrm{\ is}\ (h,w,d)\mathrm{-strongly~locally-optimal\ w.r.t.}\ \lambda\big\}.
\end{equation}

The following lemma states that if a codeword $x$ is strongly locally-optimal w.r.t. $\lambda$, then $x$ is locally-optimal w.r.t. $\lambda$.
\begin{lemma}\label{lemma:NLOinLO}
For every $\Lambda\subseteq\R^N$,
\[\NLO_{\calC,\Lambda}(h,w,d)\subseteq \LO_{\calC,\Lambda}(h,w,d).\]
\end{lemma}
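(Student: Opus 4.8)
The plan is to show that every deviation $\beta \in \mathcal{B}_d^{(w)}$ can be ``dominated'' (in the inner-product-with-$\lambda^0$ sense, after the symmetrizing substitution of Proposition~\ref{proposition:LOsymmetry}) by a reduced deviation $\beta' \in \overline{\mathcal{B}}_d^{(w)}$, and then invoke strong local-optimality at $\beta'$ to contradict the assumed failure of local-optimality at $\beta$. Concretely, I would prove the contrapositive. Suppose $x$ is not $(h,w,d)$-locally optimal w.r.t. $\lambda$. By Proposition~\ref{proposition:LOsymmetry} (applied as in the proof of Theorem~\ref{thm:d-hierarchy}), $0^N$ is not $(h,w,d)$-locally optimal w.r.t. $\lambda^0 \triangleq (-1)^x \ast \lambda$, so there is a $d$-tree $\calT \in \calT[r,2h,d](G)$ for some $r \in \calV$ with corresponding deviation $\beta = \pi_{G,\calT,w}$ satisfying $\langle \lambda^0, \beta\rangle \leqslant 0$.

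The only obstruction to $\calT$ being a reduced $d$-tree is condition (ii): in a $d$-tree the root $(r)$ has degree $\deg_G(r)$, while in a reduced $d$-tree it must have degree $\deg_G(r)-1$. (If $\deg_G(r)=1$ this is vacuous; if the root already happens to have a unique child that's even easier, but in general it has $\deg_G(r)\geqslant 2$ children since $r$ is a variable node and $d^\ast\geqslant 2$.) So the natural move is a single trimming step \emph{at the root}: apply Lemma~\ref{lemma:trim} with $p=(r)$, which has at least two children in $\calT$, to obtain a child $q$ of $(r)$ with $\langle \lambda^0, \pi_{G,\calT,w}\rangle \geqslant \langle \lambda^0, \pi_{G,\Trim(\calT,q),w}\rangle$. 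Set $\calT' \triangleq \Trim(\calT,q)$ and $\beta' \triangleq \pi_{G,\calT',w}$. Then $\langle \lambda^0, \beta'\rangle \leqslant \langle \lambda^0,\beta\rangle \leqslant 0$. It remains to check that $\calT'$ is a genuine reduced $d$-tree: it is still rooted at $r$; the root's degree dropped from $\deg_G(r)$ to $\deg_G(r)-1$; and since $q$ is a child of the root (a \emph{variable} path, being length $1$ ending at a variable node — wait, $q=(r,C)$ for a local-code node $C$, so $q$ is a local-code path), removing the whole subtree $\calT_q$ does not change the degree of any other local-code path (each still has degree $d$) nor of any remaining variable path (each still has degree $\deg_{\calT_r^{2h}}$, since those degree constraints are inherited from $\calT$ and untouched by deleting a disjoint subtree). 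Hence $\calT' \in \calT^{\reduced}[r,2h,d](G)$ and $\beta' \in \overline{\calB}_d^{(w)}$.

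Finally, $\langle \lambda^0, \beta'\rangle \leqslant 0$ together with Proposition~\ref{proposition:isoLO} gives $\langle \lambda, 0^N \oplus \beta'\rangle \leqslant \langle \lambda, 0^N\rangle$, i.e. $0^N$ is not $(h,w,d)$-strongly locally-optimal w.r.t. $\lambda^0$; by symmetry of strong local-optimality (the analogue of Proposition~\ref{proposition:LOsymmetry}, which holds verbatim since the reduced-deviation set $\overline{\calB}_d^{(w)}$ depends only on $G$, not on $x$), $x$ is not $(h,w,d)$-strongly locally-optimal w.r.t. $\lambda$. This proves $\NLO_{\calC,\Lambda}(h,w,d) \subseteq \LO_{\calC,\Lambda}(h,w,d)$.

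The main thing to get right is the bookkeeping in the middle paragraph: verifying that a single root-trimming step lands exactly in the reduced-$d$-tree class (degree of root becomes $\deg_G(r)-1$, all other degree conditions preserved), and handling the edge case where the root has degree $1$ — but really the crux is just the correct application of Lemma~\ref{lemma:trim} at $p=(r)$, everything else is routine.
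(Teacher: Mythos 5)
Your proof is correct and follows essentially the same route as the paper's: reduce via the symmetry proposition to the all-zeros codeword, pick a $d$-tree witnessing the failure of local-optimality, apply Lemma~\ref{lemma:trim} once at the root $(r)$ to trim one child and land in the reduced-$d$-tree class without increasing $\langle\lambda^0,\cdot\rangle$, then conclude via symmetry again. You are in fact slightly more careful than the paper in two places — verifying that the trimmed tree satisfies all four conditions of Definition~\ref{def:reducedDtree}, and noting that the symmetry proposition must be applied in its strong-LO form — but these are minor bookkeeping points, not a different argument.
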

\begin{proof}
We prove the contrapositive statement. Assume that $x$ is not $(h,w,d)$-locally optimal w.r.t. $\lambda$. By Proposition~\ref{proposition:LOsymmetry}, $0^N$ is not $(h,w,d)$-locally optimal w.r.t. $\lambda^0\triangleq (-1)^x\ast\lambda$. Hence, there exists a deviation $\beta=\pi_{G,\calT,w}\in\calB_d^{(w)}$ such that $\langle\lambda^0,\beta\rangle\leqslant0$. Let $\calT$ denote the $d$-tree that corresponds to the deviation $\beta$.

Denote by $(r)$ the root of $\calT$. By Lemma~\ref{lemma:trim}, the root $(r)$ has a child $q$ such that  $\langle\lambda^0,\pi_{G,\calT,w}\rangle \geqslant \langle\lambda^0,\pi_{G,\Trim(\calT,q),w}\rangle$.
Note that $\Trim(\calT,q)$ is a reduced $d$-tree rooted at $r$. Moreover, the corresponding reduced deviation
$\beta'=\pi_{G,\calT',w}$ satisfies $\langle \lambda^0,\beta'\rangle\leqslant\langle\lambda^0,\beta\rangle\leqslant0$.
We conclude by Proposition~\ref{proposition:LOsymmetry} that $x$ is not $(h,w,d)$-strong locally-optimal w.r.t. $\lambda$, as required.
\end{proof}

\medskip \noindent
Following Lemma~\ref{lemma:NLOinLO} and Theorem~\ref{thm:MLsufficient} we have the following corollary.
\begin{corollary}[strong local-optimality is sufficient for both ML and LP] \label{corr:ML-LPsufficient}
Let $\mathcal{C}(G)$ denote a Tanner code with minimum local distance $d^*$. Let $h\in\N_+$ and $w\in\R_+^h$. Let $\lambda\in\R^N$ denote the LLR vector received from the channel. If $x$ is an $(h,w,d)$-strong locally-optimal codeword w.r.t. $\lambda$ and some $2\leqslant d \leqslant d^{*}$, then (1)~$x$ is the unique maximum-likelihood codeword w.r.t. $\lambda$, and (2)~$x$ is the unique solution of LP-decoding given $\lambda$.
\end{corollary}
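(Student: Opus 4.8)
The plan is to derive Corollary~\ref{corr:ML-LPsufficient} as an immediate consequence of the two results cited just before it, so the proof is essentially a one-line composition argument. First I would invoke Lemma~\ref{lemma:NLOinLO}, instantiated with $\Lambda = \R^N$ (or more precisely with $\Lambda = \{\lambda\}$): since $x$ is $(h,w,d)$-strong locally-optimal w.r.t.\ $\lambda$, the pair $(x,\lambda)$ lies in $\NLO_{\calC,\R^N}(h,w,d)$, and by the inclusion $\NLO_{\calC,\R^N}(h,w,d)\subseteq \LO_{\calC,\R^N}(h,w,d)$ we conclude that $(x,\lambda)\in\LO_{\calC,\R^N}(h,w,d)$, i.e., $x$ is $(h,w,d)$-locally optimal w.r.t.\ $\lambda$.

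Next I would feed this into Theorem~\ref{thm:MLsufficient}. The hypothesis of the corollary assumes $2\leqslant d\leqslant d^*$, which is exactly the degree range required by that theorem, so Theorem~\ref{thm:MLsufficient} applies directly and yields both conclusions: $x$ is the unique maximum-likelihood codeword w.r.t.\ $\lambda$, and $x$ is the unique optimal solution of the LP-decoder given $\lambda$. That completes the argument.

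There is essentially no obstacle here — the corollary is a transitivity statement chaining two already-established inclusions (strong local-optimality $\Rightarrow$ local-optimality $\Rightarrow$ ML/LP optimality). The only points to double-check are bookkeeping: that the parameter hypotheses of the corollary ($h\in\N_+$, $w\in\R_+^h$, $2\leqslant d\leqslant d^*$) match those demanded by Lemma~\ref{lemma:NLOinLO} and Theorem~\ref{thm:MLsufficient}, and that Lemma~\ref{lemma:NLOinLO} is stated for an arbitrary $\Lambda$ so it may be applied to the singleton containing the received $\lambda$. Both checks are routine, so the write-up will be just two or three sentences.

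\begin{proof}
By Lemma~\ref{lemma:NLOinLO} applied with $\Lambda=\R^N$, if $x$ is $(h,w,d)$-strong locally-optimal w.r.t.\ $\lambda$ then $x$ is $(h,w,d)$-locally optimal w.r.t.\ $\lambda$. Since $2\leqslant d\leqslant d^*$, Theorem~\ref{thm:MLsufficient} implies that $x$ is the unique maximum-likelihood codeword w.r.t.\ $\lambda$ and the unique optimal solution of LP-decoding given $\lambda$.
\end{proof}
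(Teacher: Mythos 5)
Your proof is correct and is exactly the derivation the paper intends: the text introducing Corollary~\ref{corr:ML-LPsufficient} simply says it follows from Lemma~\ref{lemma:NLOinLO} and Theorem~\ref{thm:MLsufficient}, which is precisely the two-step composition you wrote out. Nothing to add.
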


Consider a weight vector $\bar{w}\in\R^{k\cdot h}$, and let
$\bar{w} = \bar{w}^1\circ\bar{w}^2\circ\ldots\circ\bar{w}^k$ denote its decomposition to $k$ blocks $\bar{w}^i\in\R^h$. We say that $\bar{w}\in\R^{k\cdot h}$ is a \emph{$k$-legal extension of $w\in\R^h$} if there exists a vector $\alpha\in\R^k$ such that $\bar{w}^i=\alpha_i\cdot w$.
Note that if $\bar{w}\in\R^{k\cdot h}$ is geometric, then it is a $k$-legal extension of the first block $\bar{w}^1$ in its decomposition.

The following theorem derives a hierarchy on the height of reduced deviations of strong local-optimality characterization.
\begin{theorem}[$h$-Hierarchy of strong LO]\label{thm:h-hierarchy}
For every $\Lambda\subseteq\R^N$, if $\bar{w}\in\R^{k\cdot h}$ is a $k$-legal extension of $w\in\R^h$, then
\[\NLO_{\calC,\Lambda}(h,w,d)\subseteq \NLO_{\calC,\Lambda}(k\cdot h,\bar{w},d).\]
\end{theorem}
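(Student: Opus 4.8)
The plan is to prove the contrapositive, mirroring the structure of the proofs of Theorem~\ref{thm:d-hierarchy} and Lemma~\ref{lemma:NLOinLO}. Suppose $x$ is not $(k\cdot h,\bar{w},d)$-strong locally-optimal w.r.t.\ $\lambda$. By Proposition~\ref{proposition:LOsymmetry} and Proposition~\ref{proposition:isoLO}, there is a reduced deviation $\beta = \pi_{G,\calT,w}\in\overline{\calB}_d^{(w)}$ (built on a path-prefix tree of height $2kh$ rooted at some variable node $r$) with $\langle\lambda^0,\beta\rangle\leqslant 0$, where $\lambda^0 \triangleq (-1)^x\ast\lambda$. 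I want to extract from the ``tall'' reduced $d$-tree $\calT$ a ``short'' reduced $d$-tree $\calT'$ of height $2h$ whose projection $\beta' = \pi_{G,\calT',w}$ satisfies $\langle\lambda^0,\beta'\rangle\leqslant 0$; then Proposition~\ref{proposition:LOsymmetry} gives that $x$ is not $(h,w,d)$-strong locally-optimal.

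The key geometric idea: the $k$-legal extension condition $\bar{w}^i = \alpha_i\cdot w$ makes the weight contribution of each ``block'' of depth $2h$ in $\calT$ proportional to the weight structure of a height-$2h$ reduced $d$-tree, up to a scalar. Concretely, I would argue that $\beta$ decomposes as a nonnegative combination $\beta = \sum_{i=1}^{k} c_i\, \beta^{(i)}$, where each $\beta^{(i)}$ is (proportional to) the projection of a reduced $d$-tree of height $2h$: take $\beta^{(1)}$ to be the projection of the top $2h$ layers of $\calT$ (which is exactly a reduced $d$-tree of height $2h$ once we renormalize the $\bar w^1$-block weights back to $w$), and for $i\geqslant 2$ take the forests hanging at depth $2(i-1)h$ — each maximal subtree rooted at a variable path $q$ at that depth, when translated to have $(t(q))$ as its root, is a reduced $d$-tree of height at most $2h$ (the root of such a subtree is a variable path with $\deg_\calT(q) = \deg_{\calT_r^{2kh}}(q) = \deg_G(t(q))-1$ children if $q$ is not a leaf of a local-code-free segment, matching clause (ii) of Definition~\ref{def:reducedDtree}). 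The normalization factor $w_\ell/\wone$ in~(\ref{eqn:w-weights}) applied to the full height-$2kh$ tree uses $\bar w$, which on block $i$ equals $\alpha_i w$; carefully tracking the $\prod 1/(\deg_\calT(q)-1)$ factors over the prefixes lying in the earlier blocks shows each $\beta^{(i)}$ is a fixed positive scalar times a genuine element of $\overline{\calB}_d^{(w)}$ for the height-$2h$ problem (on its own root variable node). Since $\langle\lambda^0,\beta\rangle = \sum_i c_i\langle\lambda^0,\beta^{(i)}\rangle \leqslant 0$ with all $c_i\geqslant 0$, at least one term has $\langle\lambda^0,\beta^{(i)}\rangle\leqslant 0$, and (after stripping the positive scalar) that $\beta^{(i)}$ is the sought reduced deviation of height $2h$.

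I expect the main obstacle to be the bookkeeping of the weight function under this decomposition: verifying that when one ``re-roots'' a subtree hanging at depth $2(i-1)h$ and renormalizes its weights to a standalone $w$-weighted reduced $d$-tree, the resulting vector is exactly a positive multiple of $\beta$ restricted to that subtree. The subtlety is that $w_\calT(p)$ for a deep path $p$ in $\calT$ carries the product $\prod_{q\in\PPrefix(p)} 1/(\deg_\calT(q)-1)$ over \emph{all} ancestors, including those in earlier blocks; when re-rooting at $q_0$ (the depth-$2(i-1)h$ variable path), the ancestors strictly above $q_0$ contribute a common factor to every path in that subtree, and the level index $\ell = \lceil|p|/2\rceil$ must be shifted so that it indexes into $\bar w^i = \alpha_i w$ correctly — this is precisely where the $k$-legality hypothesis is used, since it guarantees the shifted level weights are a scalar multiple of $w$ rather than an arbitrary sub-vector of $\bar w$. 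A secondary point to handle carefully is paths in $\calT$ that are ``short'' (terminate before reaching full depth $2kh$) because the path-prefix tree ran out of non-backtracking extensions; such a path simply sits inside exactly one block, contributes to exactly one $\beta^{(i)}$, and causes no difficulty. Once the decomposition $\beta = \sum_i c_i\beta^{(i)}$ with $c_i\geqslant 0$ and $\beta^{(i)} \in (\text{positive scalar})\cdot\overline{\calB}_d^{(w)}$ is established, the averaging argument and appeal to symmetry are immediate.
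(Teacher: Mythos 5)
Your approach is essentially the same as the paper's: prove the contrapositive, pass to $0^N$ and $\lambda^0=(-1)^x\ast\lambda$ via the symmetry propositions, decompose the tall reduced $d$-tree into height-$2h$ reduced $d$-trees at depths $0,2h,4h,\ldots$, exploit $k$-legality so each block's weights are a positive scalar multiple of $w$, and conclude by a nonnegative-combination averaging argument. The weight bookkeeping you flag (the common ancestor factor, the level-index shift $\ell\mapsto\ell-(i-1)h$, and the fact that the degree of a re-rooted depth-$2(i-1)h$ variable path is $\deg_G(t(q))-1$ so the subtree really is a reduced $d$-tree) is exactly what makes the decomposition go through.

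One small slip: you group the decomposition by level, writing $\beta=\sum_{i=1}^k c_i\beta^{(i)}$ where for $i\geqslant 2$ the term $\beta^{(i)}$ aggregates an entire forest of re-rooted subtrees at depth $2(i-1)h$. Such a forest-aggregate $\beta^{(i)}$ is generally \emph{not} a single reduced deviation, so the sentence ``that $\beta^{(i)}$ is the sought reduced deviation'' does not quite land. You either need a second averaging within the forest at the chosen level, or — as the paper does — write $\pi_{G,\calT,\bar w}=\sum_{\{\calT_j\}}\alpha_{\ell(\calT_j)}\cdot\pi_{G,\calT_j,w}$ with the sum ranging over \emph{all} individual height-$2h$ subtrees in the decomposition (not over levels), so that the averaging step directly singles out one reduced $d$-tree $\calT^*$ with $\langle\lambda^0,\pi_{G,\calT^*,w}\rangle\leqslant 0$. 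This is a presentational fix, not a conceptual one.
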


\begin{proof}
We prove the contrapositive statement. Assume that $x$ is not $(k\cdot h,\bar{w},d)$-strong locally-optimal w.r.t. $\lambda$. Proposition~\ref{proposition:isoLO} implies that $0^N$ is not $(k\cdot h,\bar{w},d)$-strong locally-optimal w.r.t. $\lambda^0\triangleq (-1)^x\ast\lambda$. Hence, there exists a reduced deviation $\beta=\pi_{G,\calT,\bar{w}}\in\calB_d^{(\bar{w})}$ such that $\langle\lambda^0,\beta\rangle\leqslant0$. Let $\calT$ denote the reduced $d$-tree that corresponds to the reduced deviation $\beta$.

Let $\{\calT_j\}$ denote a decomposition of $\calT$ to reduced $d$-trees of height $2h$ as shown in Figure~\ref{fig:reducedDecomposition}, where leaves of a subtree are the roots of other subtrees.
Let $p_j$ denote the root of a reduced $d$-tree $\calT_j$ in the decomposition of $\calT$.
For each subtree $\calT_j$ let $\ell(\calT_j)$ denote its ``level'', namely, $\ell(\calT_j) \triangleq \lfloor\abs{p_j}/h\rfloor$. Then,
\[\pi_{G,\calT,\bar{w}} = \sum_{\{\calT_j\}} \alpha_{\ell(T_j)}\cdot\pi_{G,\calT_j,w}.\]
\begin{figure}
  \begin{center}
 \includegraphics[width=0.45\textwidth]{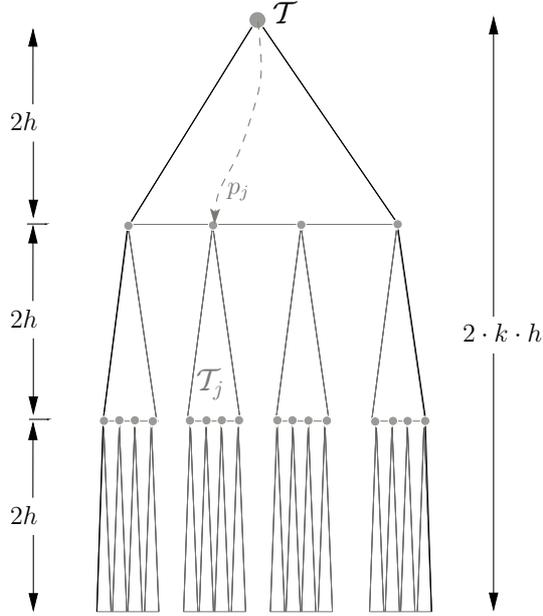}
 \caption{Decomposition of a reduced $d$-tree $\calT$ of height $2kh$ to a set of subtrees $\{\calT_j\}$ that are reduced $d$-trees of height $2h$.}
  \label{fig:reducedDecomposition}
  \end{center}
\end{figure}

Because $\langle\lambda^0,\beta\rangle\leqslant0$, we conclude by averaging that there exists at least one reduced $d$-tree $\calT^*\in\{\calT_j\}$ of height $2h$ such that $\langle\lambda^0,\pi_{G,\calT^*,w}\rangle\leqslant0$. Hence, $0^N$ is not $(h,w,d)$-strong locally-optimal w.r.t. $\lambda^0$. We apply Proposition~\ref{proposition:isoLO} again, and conclude that $x$ is not $(h,w,d)$-strong locally-optimal w.r.t. $\lambda$, as required.
\end{proof}

\section{Numerical Results} \label{sec:numerical}
We conducted simulations to demonstrate two phenomena.  First, we
checked the difference between strong local-optimality and local-optimality.
Second, we checked the effect of increasing the number of iterations
on successful decoding with ML-certificates (based on local-optimality).

We chose a $(3,6)$-regular LDPC code with blocklength $N=1008$ and
girth $g=6$~\cite{MacKay}.
For each $p\in\{0.04,0.05,0.06\}$, we randomly picked a set $\Lambda_p$ of $5000$
LLR vectors corresponding to the all zeros codeword with respect to a
BSC with crossover probability $p$.  We used unit
level weights, i.e., $w=1^h$, for the definition of local-optimality.

Let $\NLO_{0^N,\Lambda_p}(h,w,2)$ (resp.,
$\LO_{0^N,\Lambda_p}(h,w,2)$ ) denote the set of LLR vectors
$\lambda\in \Lambda_p$ such that $0^N$ is strongly locally-optimal
(resp., locally-optimal) w.r.t. $\lambda$.

Figure~\ref{fig:SLO} depicts cardinality of
$\NLO_{0^N,\Lambda_p}(h,w,2)$ and $\LO_{0^N,\Lambda_p}(h,w,2)$ as a
function of $h$, for three values of $p$.  The results suggest that,
in this setting, the sets $\NLO_{\{0^N\},\Lambda_p}(h,w,2)$ and
$\LO_{\{0^N\},\Lambda_p}(h,w,2)$ coincide as $h$ grows.  This suggests
that for finite-length codes and large height $h$, strong local-optimality is very close to local-optimality. For example, in our simulation for $p=0.04$ and $h=320$, $\lvert\LO_{\{0^N\},\Lambda_p}(h,w,2)\rvert=4868$ and $\lvert\NLO_{\{0^N\},\Lambda_p}(h,w,2)\rvert=4859$ (i.e., only $9$ LLRs out of $5000$ are in \LO\ but not in \NLO\ for height parameter $h=320$).

Iterative decoding is guaranteed to succeed after $h$ iteration if $(h,w,2)$-strongly locally-optimal w.r.t. $\lambda$. Hence, the results also suggest that the number of iterations needed to obtain reasonable decoding with ML-certificates is far greater than
the girth. Clearly, the ``tree property'' that DE analysis relies on
does not hold for so many iterations in finite-length codes.
Indeed, the simulated crossover
probabilities are in the ``waterfall'' region of the word error rate
curve with respect to \NWMS\ decoding. We are not aware of any analytic
explanation of the phenomena that iterative decoding of finite-length codes requires so many iterations in the ``waterfall'' region.

Another result of the simulation (for which we do not provide proof) is that
$\NLO_{0^N,\Lambda_p}(h,w,2)\subseteq
\NLO_{0^N,\Lambda_p}(h+1,w,2)$.  Namely, once a codeword is
strongly locally-optimal w.r.t. $\lambda$ with height $h$, then it is
also strongly locally-optimal for any height $h'>h$ (and not only multiples of $h$ as proved in Theorem~\ref{thm:h-hierarchy}). We point out that such a strengthening of the height hierarchy result is not true in general.

\begin{figure}
  \begin{center}
 \includegraphics[width=0.85\textwidth]{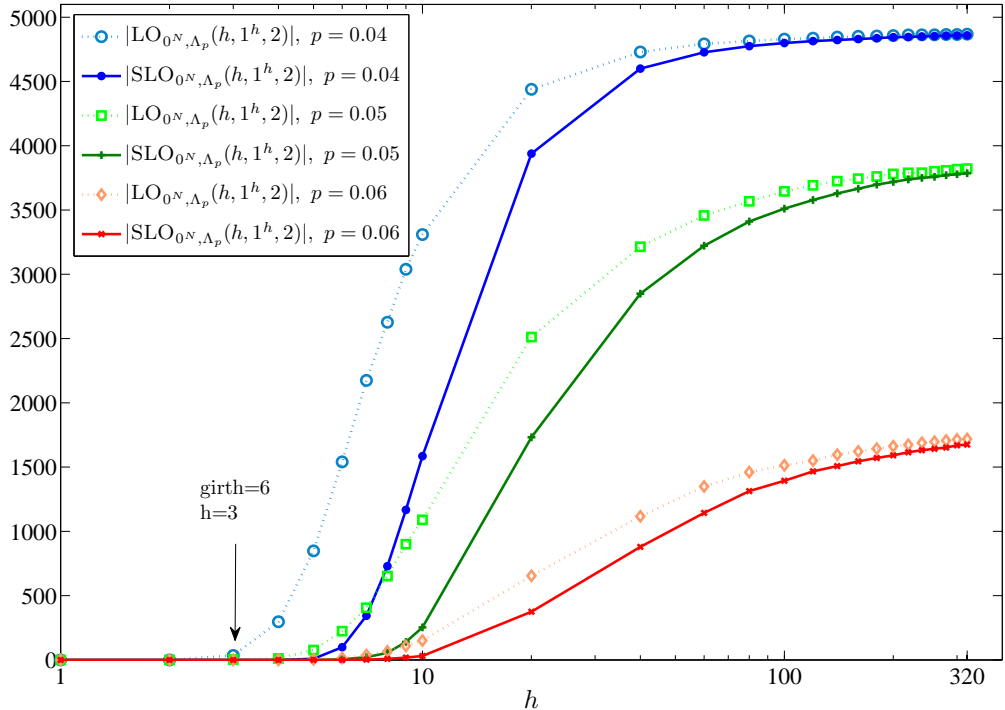}
 \caption{Growth of strong local-optimality and local-optimality as a function of the height $h$. $\lvert\Lambda_p\rvert=5000$ for $p\in\{0.04,0.05,0.06\}$.}
  \label{fig:SLO}
  \end{center}
\end{figure}

\section{Discussion}
\label{sec:discussion}
\paragraph{The degree hierarchy and probability of successful decoding of Tanner codes.}
The degree hierarchy supports the improvement in the lower bounds on the threshold value of the crossover probability $p$ of successful LP-decoding over a BSC$_p$ as a function of the degree parameter $d$ (see~\cite[Theorem~27]{EH11}). These lower bounds are proved by
analyzing the probability of a locally-optimal codeword as a function of $p$ and the degree parameter $d$.
 For example, consider any
$(2,16)$-regular Tanner code with minimum local-distance $d^*=4$ whose
Tanner graph has logarithmic girth in the blocklength.
The bounds in~\cite{EH11} imply  a lower bound on the
threshold of $p_0=0.019$ with respect to degree parameter $d=3$.
On the other hand, the lower bound on the
threshold increases to $p_0=0.044$ with respect to degree parameter $d=4$. However, note that the degree hierarchy holds for local-optimality with any height parameter $h$, while the probabilistic analysis in~\cite{EH11} restricts the parameter $h$ by a quarter of the girth of the Tanner graph.

\paragraph{The height hierarchy of strong local-optimality and iterative decoding}
The motivation for considering the height hierarchy comes from an
iterative message-passing algorithm (\NWMS) that is guaranteed to
successfully decode a locally-optimal codeword in $h$
iterations~\cite[Theorem~16]{EH11}.  Consider a Tanner code with
single parity-check local codes. Assume that $x$ is a codeword that is
strongly locally-optimal w.r.t. $\lambda$ for height parameter $h$.
Our results imply that:
\begin{inparaenum}[(i)]
\item $x$ is also strongly locally-optimal w.r.t. $\lambda$ for any
  height parameter $k\cdot h$ where $k\in\N_+$ (this is implied by the height hierarchy in
Theorem~\ref{thm:h-hierarchy}),
\item $x$ is also locally-optimal (this is implied by Lemma~\ref{lemma:NLOinLO}).
\end{inparaenum}
Therefore, we have that $x$ is also locally-optimal w.r.t. $\lambda$
for any height parameter $k\cdot h$ where $k\in\N_+$.
Thus \NWMS\ decoding is guaranteed to decode $x$ after $k\cdot h$
iterations~\cite[Theorem~16]{EH11}. This gives the following new
insight of convergence.  If a codeword $x$ is decoded after $h$
iterations and is certified to be strongly locally-optimal (and hence
ML-optimal), then $x$ is the outcome of \NWMS\ infinitely many times
(i.e., whenever the number of iterations is a multiple of $h$).

Richardson and Urbanke proved a monotonicity property w.r.t. iterations for belief propagation decoding of LDPC codes based on a tree-like setting and channel degradation~\cite[Lemma 4.107]{RU08}. Such a monotonicity property does not hold in general for suboptimal iterative decoders. In particular, the standard min-sum algorithm is not monotone for LDPC codes. The height hierarchy implies a monotonicity property w.r.t. iterations for \NWMS\ decoding with strong local-optimality certificates even without assuming the tree-like setting and channel degradation. That is, the performance of strongly locally-optimal \NWMS\ decoding of finite-length Tanner codes with SPC local codes does not degrade as the number of iterations increase, even beyond the girth of the Tanner graph. Proving an analogous non-probabilistic combinatorial height hierarchy for BP is an interesting open question.

\section{Conclusion} \label{sec:conclusion}
We present hierarchies of local-optimality with respect to two
parameters of the local-optimality characterization for Tanner
codes~\cite{EH11}.  One hierarchy is based on the local-code
node degrees in the deviations.  We prove containment, namely, the set
of locally-optimal codewords with respect to degree $d+1$ is a superset
of the set of locally-optimal codewords with respect to degree $d$.

The second hierarchy is based on the height of the deviations. We
prove that, for geometric level weights, a strongly locally-optimal
codeword is infinitely often strongly locally-optimal. In particular, a codeword that is decoded with a certificate using the
iterative decoder \NWMS\ after $h$ iterations is decoded with a
certificate after $k\cdot h$ iterations, for every integer $k$.

\appendix
\section{Proof of Lemma~\ref{lemma:trim}}\label{app:trimProof}
Let us first introduce the following averaging proposition.
\begin{proposition}\label{prop:avg}
Let $x_1,\ldots,x_k$ denote $k$ real numbers. Define $k_{\max}\triangleq\argmax_{1\leqslant i\leqslant k}\{x_i\}$, and
\[x_i' \triangleq \begin{cases}0 &\mathrm{if}~i=k_{\max},\\ \frac{k}{k-1}\cdot x_i&\mathrm{otherwise}.\end{cases}\]
Then, $\sum_{i=1}^k x_i \geqslant \sum_{i=1}^k x_i'$.
\end{proposition}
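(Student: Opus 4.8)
The plan is to reduce Proposition~\ref{prop:avg} to the elementary fact that the maximum of finitely many real numbers is at least their average. Throughout I assume $k\geqslant 2$, which is what makes the factor $\frac{k}{k-1}$ well defined (and is exactly the hypothesis supplied in the intended use, where Lemma~\ref{lemma:trim} is applied to a path $p$ with at least two children); note that for $k=1$ the statement would be false, since it would assert $x_1\geqslant 0$.

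First I would set $S\triangleq\sum_{i=1}^{k}x_i$ and $M\triangleq x_{k_{\max}}=\max_{1\leqslant i\leqslant k}x_i$, and compute the transformed sum directly. Since $x\mapsto x'$ replaces the single coordinate indexed by $k_{\max}$ with $0$ and scales every other coordinate by $\frac{k}{k-1}$, we get
\[\sum_{i=1}^{k}x_i' \;=\; \frac{k}{k-1}\sum_{i\neq k_{\max}}x_i \;=\; \frac{k}{k-1}\,(S-M).\]
Hence the claimed inequality $\sum_i x_i\geqslant\sum_i x_i'$ is equivalent to $S\geqslant\frac{k}{k-1}(S-M)$; multiplying through by $k-1>0$ this is in turn equivalent to $(k-1)S\geqslant k(S-M)$, i.e., to $kM\geqslant S$.

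The last step finishes the proof: $kM\geqslant S$ holds because $M$ is the largest among $x_1,\dots,x_k$, so $M\geqslant\frac{1}{k}\sum_{i=1}^{k}x_i=\frac{S}{k}$, and multiplying by $k>0$ gives $kM\geqslant S$. Tracing the equivalences backwards yields $\sum_i x_i\geqslant\sum_i x_i'$. I do not foresee any genuine obstacle: the only subtlety is keeping track of the sign of $k-1$ when clearing the denominator, together with the observation that ``maximum $\geqslant$ average'' requires no positivity assumption on the $x_i$, so the bound holds verbatim for arbitrary reals.
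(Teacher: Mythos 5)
Your proof is correct and follows essentially the same route as the paper's: after computing $\sum_i x_i'$ in closed form, both arguments reduce the claim to the elementary fact that the maximum of a finite list of reals is at least their average (the paper phrases it as ``$x_{k_{\max}}$ is at least the average of the other $k-1$ numbers,'' which is algebraically equivalent to your $kM\geqslant S$). Your explicit remark that $k\geqslant 2$ is needed is a nice touch but does not change the substance.
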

\begin{proof}
It holds that
\begin{align*}
\sum_{i=1}^k x_i' &= \sum_{i\neq k_{\max}} \frac{k}{k-1}\cdot x_i\\
&=  \sum_{i=1}^k x_i + \sum_{i\neq k_{\max}} \frac{1}{k-1}\cdot x_i -x_{k_{\max}}.
\end{align*}
Therefore, it is sufficient to show that
$x_{k_{\max}}\geqslant\sum_{i\neq k_{\max}} \frac{1}{k-1}\cdot x_i$.
The proposition follows because $x_{k_{\max}}$ is indeed greater or equal than the average of the other numbers.
\end{proof}

\begin{proof}[Proof of Lemma~\ref{lemma:trim}]
Consider a path $p\in\calT$, and let $p'$ denote a child of $p$ (i.e., $p'$ is an augmentation of $p$ by a single edge). We separate the inner products $\langle\lambda,\pi_{G,\calT,w}\rangle$ and $\langle\lambda,\pi_{G,\Trim(\calT,p'),w}\rangle$ to variable paths in $\hat\calV\setminus\hat V_p$ and in $\hat\calV\cap\hat V_p$ as follows.
\begin{equation}\label{eqn:trim1}
\langle\lambda,\pi_{G,\calT,w}\rangle=\underbrace{\sum_{q\in\hat\calV\setminus\hat{V}_p}\lambda_{t(q)}\cdot w_\calT(q)}_{(a)} + \underbrace{\sum_{q\in\hat\calV\cap\hat{V}_p}\lambda_{t(q)}\cdot w_\calT(q)}_{(b)}.
\end{equation}
\begin{equation}\label{eqn:trim2}
\langle\lambda,\pi_{G,\Trim(\calT,p'),w}\rangle=\underbrace{\sum_{q\in\hat\calV\setminus\hat{V}_p}\lambda_{t(q)}\cdot w_\calT(q)}_{(a')} + \underbrace{\sum_{q\in\hat\calV\cap\hat{V}_p}\lambda_{t(q)}\cdot w_\calT(q)}_{(b')}.
\end{equation}
It is sufficient to show: (i)~$\forall p'$ child of $p$: Term~(\ref{eqn:trim1}.a) $=$ Term~(\ref{eqn:trim2}.a'), and (ii)~$\exists p'$ child of $p$ s.t. Term~(\ref{eqn:trim1}.b) $\geqslant$ Term~(\ref{eqn:trim2}.b').

First we deal with the equality Term~(\ref{eqn:trim1}.a) $=$ Term~(\ref{eqn:trim2}.a'). Let $p'$ denote a child of $p$. For each $q\in\hat\calV\setminus\hat V_p$, it holds that $w_\calT(q)=w_{\Trim(\calT,p')}(q)$. Therefore,
\begin{equation}\label{eqn:trim3}
\sum_{q\in\hat\calV\setminus\hat{V}_p}\lambda_{t(q)}\cdot w_\calT(q) = \sum_{q\in\hat\calV\setminus\hat{V}_p}\lambda_{t(q)}\cdot w_{\Trim(\calT,p')}(q)
\end{equation}
Hence, Term~(\ref{eqn:trim1}.a) remains unchanged by trimming $\calT_{p'}$ from $\calT$ for every child $p'$ of $p$.


For a path $q$, let $\cost_\calT(\calT_q)\triangleq\sum_{q'\in\hat\calV_q}\lambda_{t(q')}w_\calT(q')$ denote the cost of $\calT_q$ with respect to $\calT$. Note that Term~(\ref{eqn:trim1}.b) equals $\cost_\calT(\calT_p)$.
We may reformulate Term~(\ref{eqn:trim1}.b) as follows:
\begin{equation}\label{eqn:trim4}
\cost_\calT(\calT_p)=\begin{cases}\lambda_{t(p)}w_\calT(p)+\sum_{ \{q\in\calN_\calT(p)\ :\ \lvert q\rvert = \lvert p\rvert + 1\}}\cost_\calT(\calT_q) &\mathrm{if}~t(p)\in\calV,\\
\sum_{ \{q\in\calN_\calT(p)\ :\ \lvert q\rvert = \lvert p\rvert + 1\}}\cost_\calT(\calT_q) &\mathrm{if}~t(p)\in\calJ.\end{cases}
\end{equation}
Consider two children $q_1$ and $q_2$ of $p$. By Definition~\ref{def:weightedSubtree}, for every variable path $q\in\calT_{q_2}$,
\begin{equation}
(\deg_{\calT}(p)-1)\cdot w_\calT(q) = (\deg_\calT(p)-2)\cdot w_{\Trim(\calT,q_1)}(q).
\end{equation}
Hence by summing over all the variable paths in $\calT_{q_2}$ we obtain
\begin{equation}
(\deg_{\calT}(p)-1)\cdot \cost_\calT(\calT_{q_2}) = (\deg_\calT(p)-2)\cdot \cost_{\Trim(\calT,q_1)}(\calT_{q_2}).
\end{equation}
Therefore,
\begin{equation}\label{eqn:trim5}
\frac{\cost_\calT(\calT_{q_2})}{\cost_{\Trim(\calT,q_1)}(\calT_{q_2})}=\frac{\deg_\calT(p)-2}{\deg_\calT(p)-1}\leqslant1.
\end{equation}

Let $q^{\max}$ denote a child of $p$, for which the subtree hanging from it has a maximum cost. Formally, $q^{\max}\triangleq\argmax\{\cost_\calT(\calT_q)\mid q\in\calN_\calT(p), \lvert q\rvert=\lvert p\rvert+1\}$.
We apply Proposition~\ref{prop:avg} as follows. Let $k=\deg_\calT(p)-1$, and let $x_i = \cost_\calT(\calT_{q_i})$ where $q_i$ denotes the $i$th child of $p$. Notice that by Equation~(\ref{eqn:trim5}), $x_i' = \cost_{\Trim(\calT,q^{\max})}(\calT_{q_i})$. It follows that
\begin{equation}\label{eqn:trim6}
\sum_{ \{q\in\calN_\calT(p)\ :\ \lvert q\rvert=\lvert p\rvert + 1\}}\cost_\calT(\calT_q) \geqslant \sum_{ \{q\in\calN_\calT(p)\ :\ \lvert q\rvert = \lvert p\rvert + 1\}\setminus\{q^{\max}\}}\cost_{\Trim(\calT,q^{\max})}(\calT_q).
\end{equation}

Because $\lambda_{t(p)}w_\calT(p)$ is unchanged by trimming a child of $p$, it follows from Equations~(\ref{eqn:trim4}) and~(\ref{eqn:trim6}) that
\begin{equation}\label{eqn:trim7} \cost_{\calT}(\calT_p)\geqslant\cost_{\Trim(\calT,q^{\max})}(\calT_p).
\end{equation}
Hence, we conclude that Term~(\ref{eqn:trim1}.b) $\geqslant$ Term~(\ref{eqn:trim2}.b') for $p'=q^{\max}$, and the lemma follows.
\end{proof}


\bibliographystyle{alpha}

\begin{thebibliography}{CDE{\etalchar{+}}05}

\bibitem[ADS09]{ADS09}
S.~Arora, C.~Daskalakis, and D.~Steurer, ``Message passing algorithms and
  improved {LP} decoding,'' in \emph{Proc. 41st Annual
  ACM Symp. Theory of Computing (STOC'09)}, Bethesda, MD, USA, pp. 3--12, 2009.

\bibitem[BZ04]{BZ04}
A.~Barg and G.~Z\'emor, ``Error exponents of expander codes under linear-complexity decoding,''
\emph{SIAM J. Discr. Math.}, vol.~17, no.~3, pp 426--445, 2004.

\bibitem[CF02]{CF02}
J.~Chen and M.~P.~C. Fossorier, ``Density evolution for two improved {BP-based} decoding algorithms of
  {LDPC} codes,'' \emph{{IEEE} Commun. Lett.}, vol.~6, no.~5, pp. 208 --210, May 2002.

\bibitem[EH11]{EH11}
G.~Even and N. Halabi, ``On decoding irregular Tanner codes with local-optimality guarantees,'' \emph{CoRR},
\url{http://arxiv.org/abs/1107.2677}, Jul. 2011.

\bibitem[Fel03]{Fel03}
J.~Feldman, ``Decoding error-correcting codes via linear programming,'' Ph.D.
  dissertation, MIT, Cambridge, MA, 2003.

\bibitem[FK00]{FK00}
B.J. Frey and R.~Koetter,
``Exact inference using the attenuated max-product algorithm,''
In \emph{Advanced Mean Field Methods: Theory and Practice}, Cambridge,
  MA: MIT Press, 2000.

\bibitem[FS05]{FS05}
J.~Feldman and C.~Stein, ``{LP} decoding achieves capacity,'' in \emph{Proc. Symp. Discrete Algorithms (SODA'05)}, Vancouver, Canada, Jan. 2005, pp. 460--469.


\bibitem[JP11]{JP11}
Y.-Y.~Jian and H.D.~Pfister, ``Convergence of weighted min-sum decoding via dynamic programming on
  coupled trees,'' \emph{CoRR},
\url{http://arxiv.org/abs/1107.3177}, Jul. 2011.

\bibitem[KU11]{KU11}
S.~B. Korada and R.~L. Urbanke, ``Exchange of limits: Why iterative decoding works,''
\emph{{IEEE} Trans. Inf. Theory}, vol.~57, no.~4, pp. 2169--2187, Apr. 2011.

\bibitem[Mac]{MacKay}
D.~MacKay, \emph{Encyclopedia of Sparse Graph Codes}. Available:
  \url{http://www.inference.phy.cam.ac.uk/mackay/codes/}

\bibitem[RU01]{RU01}
T.~Richardson and R.~Urbanke, ``{The capacity of low-density parity-check codes
  under message-passing decoding},'' \emph{IEEE Trans. Inf.
  Theory}, vol.~47, no.~2, pp. 599--618, Feb. 2001.

\bibitem[RU08]{RU08}
T.~Richardson and R.~Urbanke, \emph{Modern {C}oding {T}heory}. Cambridge University Press, New York, NY, 2008.

\bibitem[SS96]{SS96}
M.~Sipser and D.~A.~Spielman, ``{Expander codes}'', \emph{{IEEE}
  Trans. Inf. Theory}, vol.~42, no.~6, pp. 1710--1722, Nov. 1996.

\bibitem[Von10]{Von10}
P.~Vontobel, ``A factor-graph-based random walk, and its relevance for {LP}
  decoding analysis and Bethe entropy characterization,'' in \emph{Proc. Information
  Theory and Applications Workshop}, UC San Diego, LA Jolla, CA, USA, Jan.~31-Feb.~5, 2010.

\bibitem[WF01]{WF01}
Y.~Weiss and W.~T.~Freeman, ``{On the optimality of solutions of the max-product belief-propagation algorithm in arbitrary graphs,}'' \emph{IEEE
  Trans. Inf. Theory}, vol.~47, no.~2, pp. 736--744, Feb. 2001.

\bibitem[WJW05]{WJW05}
M.~J.~Wainwright, T.~S.~Jaakkola, and A.~S.~Willsky, ``{MAP estimation via agreement on trees: message-passing and linear programming,}'' \emph{IEEE
  Trans. Inf. Theory}, vol.~51, no.~11, pp. 3697--3717, Nov. 2005.
\end{thebibliography}

\newcommand{\etalchar}[1]{$^{#1}$}

\end{document}